\theoremstyle{plain}
\newtheorem{theorem}{Theorem}[section]
\newtheorem{cor}[theorem]{Corollary}
\newtheorem{lemma}[theorem]{Lemma}
\newtheorem{proposition}[theorem]{Proposition}
\theoremstyle{definition}
\newtheorem{definition}[theorem]{Definition}
\newtheorem{remark}[theorem]{Remark}
\newcommand{\E}{{\mathbb{E}}}
\newcommand{\N}{{\mathbb{N}}}
\renewcommand{\P}{{\mathbb{P}}}
\newcommand{\R}{{\mathbb{R}}}
\newcommand{\ub}{{\underline b}}
\newcommand{\uy}{{\underline y}}
\newcommand{\uz}{{\underline z}}
\newcommand{\unu}{{\underline \nu}}
\renewcommand{\P}{{\mathbb P}}
\newcommand{\cD}{{\cal D}}
\newcommand{\cE}{{\cal E}}
\newcommand{\cF}{{\cal F}}
\newcommand{\cL}{{\cal L}}
\newcommand{\cN}{{\cal N}}
\newcommand{\cS}{{\cal S}}
\newcommand{\cT}{{\cal T}}
\newcommand{\be}{\begin{equation}}
\newcommand{\ee}{\end{equation}}
\newcommand{\bea}{\begin{eqnarray}}
\newcommand{\eea}{\end{eqnarray}}
\newcommand{\beast}{\begin{eqnarray*}}
\newcommand{\eeast}{\end{eqnarray*}}
\newcommand{\bproof}{\begin{proof}}
\newcommand{\eproof}{\end{proof}}
\newcommand{\fpi}{p}
\newcommand{\fPi}{\Pi}
\title{Dividend maximization in a hidden Markov switching model}
\author{Michaela Sz\"olgyenyi}
\begin{document}

\date{Preprint, February 2016}

\maketitle


\begin{abstract}
In this paper we study the valuation problem of an insurance company by maximizing the expected discounted future dividend payments in a model with partial information that allows for a changing economic environment.
The surplus process is modeled as a Brownian motion with drift. This drift depends on an underlying Markov chain the current state of which is assumed to be unobservable. The different states of the Markov chain thereby represent different phases of the economy.\\
We apply results from filtering theory to overcome uncertainty and then we give an analytic characterization of the optimal value function. Finally, we present a numerical study covering various scenarios to get a clear picture of how dividends should be paid out.
\end{abstract}

\vspace{5mm}
\noindent {\it Keywords:} dividend maximization, hidden Markov model, filtering theory, stochastic optimal control, viscosity solutions\\
Mathematics Subject Classification (2010): 91B30, 91B70, 93E20\\


\centerline{\underline{\hspace*{16cm}}}

M. Sz\"olgyenyi\\
Institute of Statistics and Mathematics, Vienna University of Economics and Business, 1020 Vienna, Austria\\
michaela.szoelgyenyi@wu.ac.at

\centerline{\underline{\hspace*{16cm}}}


\section{Introduction}

The classical risk measure for insurance companies is the probability of ruin.
This quantity is intensively studied, since in 1903 \citet{lundberg1903} introduced his model for the surplus of an insurance company.
However, in many models a trajectory can only either lead to ruin or tend to infinity.
Hence, \citet{definetti1957} introduced expected discounted future dividend payments as a valuation principle for a homogeneous insurance portfolio,
which builds an alternative risk measure.
This concept originally comes from corporate finance, where the firm value is often determined by the accumulated future dividend payments.

Our aim is to solve the valuation problem of an insurance company in these terms. Since its introduction, the dividend maximization problem has been solved in various setups, for example by \citet{shreve1984}, \citet{jeanblanc1995}, \citet{radner1996}, and \citet{asmussen1997} in diffusion models. \citet{albrecher20112} study a similar problem with random intervention times.
Overviews can be found in \citet{albrecher20092}, or \citet{avanzi2009}. For an introduction to optimization problems in insurance in general we refer to \citet{schmidli2008} and \citet{azcue2014}.\\

However, all of these contributions assume a constant economic environment. The dividend maximization problem is considered over a potentially long time horizon, which makes the assumption that the economy does not change a strong one. Since a changing economic environment structurally influences the insurance market, we would like to incorporate such changes into our model.
In setups which allow for a change in the economic environment the dividend maximization problem has been studied for example by \citet{jiang2012}, \citet{sotomayor2011}, \citet{zhu2013}, \citet{albrecher2012}, and \citet{azcue2015}.  \citet{jiang2012}, \citet{sotomayor2011}, and \citet{zhu2013} consider a diffusion model for the surplus, the parameters of which are driven by an underlying Markov chain. Such a setup is called {\em regime switching model}. In a Cram\'er-Lundberg type model \citet{albrecher2012} allow the economy to become worse once before ruin, and \citet{azcue2015} allow for a finite number of shifts of the economy. In all these models, the driving Markov chain is assumed to be {\em observable}. This means that they assume full information and therefore their models differ from the model studied herein.\\

We use the dividend maximization approach to solve the valuation problem of an insurance company in a {\em hidden} Markov model.
More precisely, we model the surplus process of the insurance company as a Brownian motion with drift. The drift is assumed to be driven by an underlying Markov chain, the current state of which is {\em unobservable} under the available information.
In contrast to \citet{sz12} and \citet{sz13}, where the setup was Bayesian, i.e., the underlying Markov chain was not allowed to change its state, we allow for regime shifts. This gives the model a different interpretation.
Here, the different states of the Markov chain represent different phases of the economy. It is of certain interest on the one hand to allow for changes in the economic environment and on the other hand it is realistic that the current state of the economy is not exactly known instantaneously, but only over time, and also over time the drift of a diffusion cannot be estimated satisfactorily, see \cite[Chapter 4.2]{rogers2013} -- even more if shifts are allowed.
Thus allowing for uncertainty extends \cite{sotomayor2011} to a more practically relevant direction.\\

In the literature of mathematical finance hidden Markov models have already been used intensively for studying investment problems, e.g., in \citet{karatzas2001}, \citet{sass2004}, \citet{rieder2005}, or \citet{frey2012, frey2014}. Also dividend problems were solved in the mathematical finance literature, see \citet{hubalek2004} and \citet{grandits2007}, who seek to maximize the expected accumulated utility of dividends, and the expected utility of accumulated dividends, respectively. However, in the insurance context related results concerning hidden Markov models are scarce. We refer to \citet{gerber1977} as an example, who models the value of a single insurance policy as a Brownian motion with unobservable drift describing uncertainty about the quality of a risk. Another, more recent example is the paper by \citet{liang2014}, who study optimal reinsurance and investment under unobservable claim size and intensity.
In \citet{decamps2015} a valuation problem similar to the dividend maximization problem is studied in a rather specific model from the point of view of corporate finance.\\

This paper is organized as follows.
In Section \ref{sec:Setup} we define our model and show how to overcome uncertainty by applying a result from stochastic filtering theory, and thus transform the setup into one under full information.
The stochastic optimization problem under study is presented in Section \ref{sec:Opti}.
For solving the stochastic optimization problem we derive the Hamilton-Jacobi-Bellman (HJB) equation and characterize the optimal value function as a viscosity solution to this HJB equation.
We also prove uniqueness of the viscosity solution even though there is a lack of boundary conditions.
In Section \ref{sec:Num} we treat the problem numerically.
First of all we examine the filter dynamics. Then, we solve the HJB equation numerically. For this we need to introduce a correction term to ensure positivity of the scheme, but we can show that the approximate solution converges.
We present a multitude of numerical examples.
Furthermore, we are able prove admissibility of the candidate optimal strategies, which means showing that the underlying system of stochastic differential equations with discontinuous drift and degenerate diffusion has a solution.
Section \ref{sec:Concl} concludes the paper.\\

The main contribution of this paper is the analytical characterization of the solution to the dividend maximization problem in a hidden Markov switching model, including a non-standard uniqueness proof of the generalized solution to the associated Hamilton-Jacobi-Bellman (HJB) equation, and an extensive numerical study of the outcomes of this model. The intention behind this study is to impart good comprehension of the model and of how to optimally pay out dividends. Assuming only partial information makes the model more natural and realistic.


\section{Setup and filtering}
\label{sec:Setup}

All stochastic variables introduced in the following are assumed to be defined on the filtered probability space $(\cE, \cF, \{\cF_t\}_{t \ge 0}, \P)$.\\
The surplus process is given by
\begin{equation}
\label{eq:dynX1}X_t=x+\int_0^t \mu_s \,ds+\sigma B_t - L_t\,,
\end{equation}
with initial capital $x>0$, where $\mu=(\mu_t)_{t \ge 0}$ is the unobservable drift process, $\sigma$
is the constant and known volatility, and $B=(B_t)_{t \ge 0}$ is a standard Brownian motion.
The accumulated dividend process $L$ is given by
\begin{align}
\label{eq:dynL1}dL_t=u_t \,dt\,,
\end{align}
with $L_0=0$ and density $u_t \in [0,K]$ for all $t\ge0$ that serves as the control variable in our optimization problem.
Note that the surplus process $X$ is always associated to a dividend strategy, however, for notational reasons, we will not make that explicit.\\
Furthermore, let $\mu_t:=\mu(Y_t)$, where $Y=(Y_t)_{t\ge 0}$ is an $M$-state Markov chain with known generator matrix $Q=(q_{ij})_{i,j=1}^M$. Let $\mu_t \in \lbrace \mu_1, \dots, \mu_M \rbrace$, where $\mu_t=\mu_i$, if $Y_t=e_i$, and $e_i$ is the $M$-dimensional unit vector the $i$-th component of which is $1$. Without loss of generality let $\mu_1 > \dots > \mu_M$. We assume that the current state of the Markov chain is unobservable under the observation filtration, but we know its initial distribution $\P(Y_0=e_i)=p_i$ with $p_i > 0$ for all $i \in \lbrace 1, \dots ,M\rbrace$ and $\sum_{i=1}^M p_i = 1$.\\

Note that it is crucial that the volatility is assumed not to be driven by the Markov chain, as then it would be possible to estimate the current state from the quadratic variation.\\

The uncontrolled surplus process $Z=(Z_t)_{t\geq 0}$ is given by
\begin{align}
\label{eq:dynZ1}Z_t&= x + \int_0^t \mu_s\, ds+\sigma B_t\,.
\end{align}
As the dividend payments have to be adapted to the uncontrolled process, the observation filtration is given by $\{\cF^{Z}_t\}_{t\ge 0} \subset \{\cF_t\}_{t\ge 0}$, which is the augmentation of the filtration generated by $Z$.\\

\subsection*{Stochastic filtering}
\label{subsec:filtering}

As $\mu_t$ is not $\cF^Z_t$-measurable, we are in a situation of partial information. To overcome this uncertainty, we apply a result from stochastic filtering theory. This means we replace the unobservable parameter $\mu_t$ by an estimator, which potentially uses all the information generated by $\cF^Z_t$, but not more. We refer the interested reader to \citet{elliott1995} for more information about hidden Markov models and their filtering, and to \citet{bain2009} for stochastic filtering in general. \citet{rieder2005} suggest using the Wonham filter (see \cite{liptser1977,wonham1964}) for the case where the unobservable variable is driven by a Markov chain.\\

From \citet[Theorem 9.1]{liptser1977} we know the following proposition.
\begin{proposition}
 Denote the conditional probability that the Markov chain is in state $i$ at time $t$ (and hence $\mu_t=\mu_i$) as
 \begin{align*}
  \pi_i(t)=\P(\mu_t=\mu_i \mid \cF_t^{Z})
 \end{align*}
 for $i= 1, \ldots, M$, and the estimator for the drift as
 \begin{align}\label{eq:estimator}
  \nu_t=\E(\mu_t \mid \cF_t^{Z})=\sum_{i=1}^M \mu_i \pi_i (t)\,.
 \end{align}
 Then $(\pi_1, \dots \pi_M)$ solves the following system of stochastic differential equations
 \begin{align}\label{eq:wonham}
  \pi_i(t)&=p_i+\int_0^t \sum_{j=1}^M q_{ji} \pi_j (s) \, ds + \int_0^t \pi_i (s) \frac{\mu_i-\nu_s}{\sigma} \, dW_s\,,\\
  \pi_i(0)&=p_i\,,
 \end{align}
 for $i=1,\ldots,M$, with the innovation process
 \begin{align}\label{eq:W}
  W_t=\int_0^t\frac{\mu_s-\nu_s}{\sigma} \,ds + B_t\,.
 \end{align}

 Furthermore, $W=(W_t)_{t \ge 0}$ is an $\{\cF_t^{Z}\}_{t \ge 0}$ - Brownian motion.
\end{proposition}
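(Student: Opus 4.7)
My plan is to derive the filter equations via the innovations approach, treating the indicator processes of the Markov chain as the unobservable signals and then projecting their semimartingale decomposition onto the observation filtration.

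First I would introduce the indicator processes $I_i(t)=\mathbf{1}\{Y_t=e_i\}$, so that $\pi_i(t)=\E[I_i(t)\mid \cF_t^Z]$ and $\mu_t=\sum_i \mu_i I_i(t)$. From the theory of finite-state Markov chains with generator $Q$, each $I_i$ admits the semimartingale decomposition
\begin{equation*}
I_i(t)=p_i+\int_0^t \sum_{j=1}^M q_{ji}I_j(s)\,ds+m_i(t),
\end{equation*}
where $m_i$ is a $\cF_t$-martingale. This is the signal part. On the observation side, rewriting \eqref{eq:dynZ1} as $dZ_t=\nu_t\,dt+\sigma\,dW_t$ with $W$ defined by \eqref{eq:W} identifies $W$ as the innovations process; the standard Lévy-characterization argument (continuous, $\cF_t^Z$-adapted, quadratic variation $t$, and $W$ a martingale because $\nu_t$ is the optional projection of $\mu_t$ onto $\cF_t^Z$) shows $W$ is an $\cF_t^Z$-Brownian motion.

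Next, I would apply the general Fujisaki–Kallianpur–Kunita filtering formula to the signal $I_i$: the optional projection $\pi_i(t)$ satisfies
\begin{equation*}
\pi_i(t)=p_i+\int_0^t \widehat{(QI)}_i(s)\,ds+\int_0^t \frac{1}{\sigma}\bigl(\widehat{I_i\mu}(s)-\pi_i(s)\nu_s\bigr)\,dW_s,
\end{equation*}
where hats denote conditional expectation given $\cF_s^Z$ and the correlation between signal and observation noise vanishes (since $B$ is independent of $Y$, so $\langle m_i,B\rangle=0$). The two hatted terms simplify because $Q$ is deterministic and $I_i\mu_t=\mu_i I_i(t)$: one gets $\widehat{(QI)}_i(s)=\sum_j q_{ji}\pi_j(s)$ and $\widehat{I_i\mu}(s)=\mu_i\pi_i(s)$. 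Substituting and using $\nu_s=\sum_j\mu_j\pi_j(s)$ yields exactly the Wonham SDE \eqref{eq:wonham}, with coefficient $\pi_i(s)(\mu_i-\nu_s)/\sigma$ in front of $dW_s$.

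The main obstacle is justifying the innovations representation rigorously: one must verify that $W$ is indeed an $\cF_t^Z$-Brownian motion (the nontrivial step is the martingale property of $W$ under the observation filtration, which relies on the projection identity $\E[\mu_t\mid \cF_t^Z]=\nu_t$ together with a Fubini-type interchange), and one must invoke an appropriate martingale representation theorem on $\cF_t^Z$ to conclude that every square-integrable $\cF_t^Z$-martingale is a stochastic integral with respect to $W$. Both points are standard under the boundedness of $\mu$ (which holds here as $Y$ is finite-state) and are the content of Liptser–Shiryaev Theorem~9.1, so I would invoke that theorem for these technical parts and concentrate the explicit calculation on identifying the drift and diffusion coefficients as above.
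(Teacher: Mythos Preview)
Your proposal is correct. In fact, the paper does not give its own proof of this proposition at all: it simply states the result as a direct consequence of \cite[Theorem 9.1]{liptser1977} and moves on. Your sketch via the innovations approach and the Fujisaki--Kallianpur--Kunita formula is precisely the machinery underlying that theorem, and your identification of the drift and diffusion coefficients is accurate; so you have supplied more detail than the paper itself, while ultimately invoking the same reference for the technical representation results.
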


In particular, we have that the estimator $\nu=(\nu_t)_{t \ge 0}$ is adapted to the observation filtration.\\
Please note that $\pi_M(t)=1-\sum_{i=1}^{M-1} \pi_i(t)$ for all $t \ge 0$, which implies that the correct state space of the filter is the simplex $\bar \cS := \{(\pi_1,\dots,\pi_M) \in [0,1]^M: \sum_{i=1}^M \pi_i = 1\}$, the interior of which is denoted by $\cS := \{(\pi_1,\dots,\pi_M) \in (0,1)^M: \sum_{i=1}^M \pi_i = 1\}$.
For later use, we define $\fPi:=(\fPi_t)_{t \ge 0}=(\pi_1(t), \dots, \pi_{M-1}(t))_{t \ge 0}$ and $\fpi:=\fPi_0=(p_1, \dots, p_{M-1})$.

From now on we consider the following $M$-dimensional system of SDEs:
\begin{align}
\label{eq:dynX2}X_t&= x + \int_0^t (\nu_s-u_s) \,ds +\sigma W_t\,,\\
\label{eq:dynpi}\pi_i(t)&=p_i+\int_0^t \left(q_{Mi} + \sum_{j=1}^{M-1} (q_{ji}-q_{Mi}) \pi_j (s)\right) \, ds + \int_0^t \pi_i (s) \frac{\mu_i-\nu_s}{\sigma} \, dW_s \qquad i= 1, \dots, M-1\,,
\end{align}

where
\begin{align}\label{eq:nu}
\nu_t=\mu_M + \sum_{j=1}^{M-1}(\mu_j-\mu_M)\pi_j(t)\,.
\end{align}

Since there is only one source of uncertainty, which is adapted to the observation filtration, in system \eqref{eq:dynX2}, \eqref{eq:dynpi}, we are now in a situation of full information, but at the cost of $M-1$ additional dimensions.


\section{Stochastic optimization}
\label{sec:Opti}

In this section we  at first define the stochastic optimization problem under study. Then we derive the associated HJB equation.
Finally, we present the main result of this section, which is the characterization of the solution of the optimization problem as the unique viscosity solution to the HJB equation.\\

We would like to find the optimal value function $V$, which is the supremum over all dividend policies of the discounted dividend payments up to the time of ruin $\tau:=\inf\{t\ge 0 \, \vline \, X_t\le 0\}$,

\begin{align*}
V(x,\fpi)=\sup_{u\in A} J^{(u)}(x,\fpi)=\sup_{u\in A} \E_{x,\fpi} \left(\int_0^{\tau}e^{-\delta t}u_t \,dt\right)\,,
\end{align*}
where $\delta>0$ is the discount rate, $A$ denotes the set of admissible controls, and $\E_{x,\fpi}(\cdot)$ denotes the expectation under the conditions $X_0=x$ and $\fPi_0=\fpi$. Admissible controls are $\{\cF^Z_t\}_{t\ge 0}$-progressively measurable, $[0,K]$-valued, and fulfill $u_t \equiv 0$ for $t>\tau$.\\

Note that the underlying system of stochastic processes \eqref{eq:dynX2}, \eqref{eq:dynpi} describes autonomous state dynamics in the sense of \cite[Section IV.5]{fleming2006}. Furthermore, we will consider an infinite time horizon. Therefore, the optimal control will be Markovian.

\begin{lemma}\label{lem:x-infinity}
The optimal value function $V$ is continuous. We have $0\le V \le \frac{K}{\delta}$, $V$ increases in $x$, and $\lim_{x \to \infty} V(x,\fpi)=\frac{K}{\delta}$ uniformly in $\fpi$.
\end{lemma}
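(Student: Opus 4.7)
The plan is to dispose of boundedness, monotonicity, and the limit at infinity by elementary coupling arguments, and to treat continuity separately as the main technical step. For the bounds, $V \ge 0$ is immediate from $u \equiv 0$, while $V \le K/\delta$ follows from $u_t \le K$ by majorising the integrand with $K e^{-\delta t}$ and extending the integration to $[0,\infty)$.

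For monotonicity in $x$, fix $x_1 < x_2$ and any admissible $u$, and realise both trajectories on the same probability space with the common innovation Brownian motion $W$ and the same filter $\fPi$ (permissible because \eqref{eq:dynpi} does not involve $X$). Then $X^{x_2}_t - X^{x_1}_t \equiv x_2 - x_1$ up to $\tau^{x_1}$, so $\tau^{x_1} \le \tau^{x_2}$ and $J^{(u)}(x_1,\fpi) \le J^{(u)}(x_2,\fpi)$; taking the supremum over $u$ gives monotonicity. For the limit at infinity, apply the admissible strategy $u \equiv K$ and use $\nu_s \in [\mu_M,\mu_1]$ to bound the corresponding surplus below by $\tilde X_t := x + (\mu_M - K)t + \sigma W_t$. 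This yields $\tau^K \ge \tau' := \inf\{t : \tilde X_t \le 0\}$ and
\[
V(x,\fpi) \ge \tfrac{K}{\delta}\bigl(1 - \E(e^{-\delta \tau'})\bigr).
\]
Since the law of $\tau'$ does not depend on $\fpi$ and $\tau' \to \infty$ almost surely as $x \to \infty$, dominated convergence gives $\E(e^{-\delta \tau'}) \to 0$, and squeezing against the upper bound $K/\delta$ delivers the uniform-in-$\fpi$ limit.

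Continuity is the hard part, and the main obstacle is obtaining a modulus of continuity uniform over the admissible set $A$. For continuity in $x$, I reuse the coupling: on $\{\tau^{x_1} < \infty\}$ one has $X^{x_2}_{\tau^{x_1}} = \Delta := x_2 - x_1$, and the excess $\tau^{x_2} - \tau^{x_1}$ is the hitting time of $0$ from $\Delta$ for a diffusion with drift in $[\mu_M - K,\mu_1]$ and volatility $\sigma$. Stochastic comparison with a Brownian motion with drift $\mu_1$ starting at $\Delta$ yields a deterministic $\varepsilon(\Delta) \to 0$ such that $1 - \E(e^{-\delta(\tau^{x_2} - \tau^{x_1})} \mid \cF_{\tau^{x_1}}) \le \varepsilon(\Delta)$, and splitting the value integral at $\tau^{x_1}$ bounds $|J^{(u)}(x_2,\fpi) - J^{(u)}(x_1,\fpi)|$ by $(K/\delta)\varepsilon(\Delta)$, uniformly in $u$ and $\fpi$.

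For continuity in $\fpi$, I would appeal to standard SDE stability. The coefficients in \eqref{eq:dynpi} are Lipschitz on the compact simplex $\bar\cS$ and $\nu$ is linear in $\fPi$, so for fixed $u$ and shared $W$ a Gronwall argument yields
\[
\E\!\left(\sup_{t \le T}\bigl(\|\fPi^{\fpi_1}_t - \fPi^{\fpi_2}_t\|^2 + |X^{\fpi_1}_t - X^{\fpi_2}_t|^2\bigr)\right) \le C(T)\|\fpi_1 - \fpi_2\|^2,
\]
uniformly in $u$, because $u$ enters the surplus dynamics only as a bounded additive perturbation of the drift. Choosing $T$ so that the discount tail $(K/\delta)e^{-\delta T}$ is arbitrarily small (uniformly in data and controls) and converting pathwise closeness of $X$ into closeness of the ruin-time-indexed integrals via a hitting-time modulus then yields a joint modulus of continuity for $V$ that is locally uniform in $(x,\fpi)$.
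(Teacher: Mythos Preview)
Your proposal is correct, but the route is quite different from the paper's. The paper disposes of all four claims by citation: continuity from \citet[Chapter 3, Theorem 5]{krylov1980}, monotonicity and the limit at infinity from \citet[Chapter~2.5.1]{schmidli2008}, and the bounds $0\le V\le K/\delta$ by the trivial majorisation you also use. You instead give direct, self-contained coupling arguments. Your treatment of the bounds, the monotonicity (exploiting that the filter dynamics \eqref{eq:dynpi} are autonomous, so that $X^{x_2}_t-X^{x_1}_t\equiv x_2-x_1$ under a common control), and the uniform limit (dominating the $u\equiv K$ surplus below by $x+(\mu_M-K)t+\sigma W_t$, whose hitting-time Laplace transform is explicit and $\fpi$-free) are all clean and arguably more informative than a bare reference.

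The one place your sketch is thinner than it should be is the final step of continuity in $\fpi$: ``converting pathwise closeness of $X$ into closeness of the ruin-time-indexed integrals via a hitting-time modulus''. The idea is right---on $\{\tau^{\fpi_1}\wedge\tau^{\fpi_2}\le T\}$ the lagging process is within $\sup_{t\le T}|X^{\fpi_1}_t-X^{\fpi_2}_t|$ of zero at the earlier ruin time, and then your $x$-modulus bounds the residual, while the complementary event contributes at most $(K/\delta)e^{-\delta T}$---but hitting times are not continuous functionals of paths in general, so this step deserves to be written out rather than named. The paper avoids this issue entirely by invoking Krylov's general continuity theorem for controlled diffusions; your approach buys explicit (even Lipschitz-type) moduli at the cost of having to handle the hitting-time discontinuity by hand.
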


\begin{proof}
From \cite[Chapter 3, Theorem 5]{krylov1980} we know that the optimal value function $V$ is continuous.\\
The monotonicity of $V$ with respect to $x$ follows by an argument from \cite[Chapter 2.5.1, p. 97]{schmidli2008}.\\
Clearly, the optimal value function is bounded by $0\le V(x,\fpi)\le \int_0^\infty K e^{-\delta s}ds=\frac{K}{\delta}$, and it is easy to check that in the limit it converges to $\frac{K}{\delta}$, cf. \cite[Chapter 2.5.1, p. 97]{schmidli2008}.
\end{proof}

\subsection*{The Hamilton-Jacobi-Bellman equation}
\label{subsec:HJB}

For deriving the HJB equation we need a version of the dynamic programming principle, or Bellman principle, see \cite[Chapter 3, Theorem 6]{krylov1980}.
\begin{proposition}[Bellman principle]
\label{prop:bellman}
For every bounded stopping time $\eta$ we have
\[
V(x,\fpi)
=\sup_{u \in A} \E_{x,\fpi}\left(\int_0^{\tau \wedge \eta}e^{-\delta t}u_t\,dt+e^{-\delta(\tau\wedge\eta)}V(X_{\tau\wedge\eta},\fPi_{\tau\wedge\eta})\right)\,.
\]
\end{proposition}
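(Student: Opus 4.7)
The plan is to reduce this statement to a general dynamic programming theorem. Under the filtering reformulation the joint process $(X_t,\fPi_t)$ satisfies the autonomous, time-homogeneous system \eqref{eq:dynX2}--\eqref{eq:dynpi}, is driven by the single $\{\cF^Z_t\}$-Brownian motion $W$, takes values in $(0,\infty)\times\cS$, and has ruin time $\tau$ equal to the first exit from the half-space $\{x>0\}$. The running payoff $e^{-\delta t}u_t$ is state-independent and the control set $[0,K]$ is compact. These are precisely the hypotheses under which \cite[Chapter~3, Theorem~6]{krylov1980} yields the Bellman principle for bounded stopping times $\eta$ (cf.\ the notion of autonomous state dynamics of \cite[Section~IV.5]{fleming2006}), so the statement follows by a direct appeal to that theorem once the coefficients have been checked to fit into Krylov's framework.

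If one wanted to carry out the two inequalities by hand, I would proceed as follows. For ``$\le$'', fix $u\in A$ and decompose
\[
\int_0^\tau e^{-\delta t}u_t\,dt = \int_0^{\tau\wedge\eta} e^{-\delta t}u_t\,dt + e^{-\delta(\tau\wedge\eta)}\int_0^{\tau-\tau\wedge\eta} e^{-\delta s}u_{s+\tau\wedge\eta}\,ds.
\]
Conditioning on $\cF^Z_{\tau\wedge\eta}$ and invoking the strong Markov property of the autonomous diffusion $(X,\fPi)$, the inner integral has conditional expectation $J^{(\tilde u)}(X_{\tau\wedge\eta},\fPi_{\tau\wedge\eta})\le V(X_{\tau\wedge\eta},\fPi_{\tau\wedge\eta})$, where $\tilde u_s := u_{s+\tau\wedge\eta}$ is admissible for the shifted problem. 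Taking expectation and supremum over $u$ delivers the upper bound. For ``$\ge$'', given $\varepsilon>0$ and $u^0\in A$, use the continuity of $V$ proved in Lemma~\ref{lem:x-infinity} to partition $(0,\infty)\times\bar\cS$ into countably many small sets on which $V$ varies by less than $\varepsilon$, pick an $\varepsilon$-optimal admissible control on each, and thereby produce a measurable assignment $(y,\fpi')\mapsto u^{(y,\fpi')}$ satisfying $J^{(u^{(y,\fpi')})}(y,\fpi')\ge V(y,\fpi')-\varepsilon$. Paste
\[
\bar u_t := u^0_t\,\mathbf{1}_{\{t<\tau\wedge\eta\}} + u^{(X_{\tau\wedge\eta},\fPi_{\tau\wedge\eta})}_{t-\tau\wedge\eta}\,\mathbf{1}_{\{t\ge\tau\wedge\eta\}},
\]
which lies in $A$, apply the strong Markov property again to obtain $V(x,\fpi)\ge J^{(\bar u)}(x,\fpi)$, and let $\varepsilon\downarrow 0$ after taking the supremum over $u^0$.

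The main obstacle is a careful verification of the strong Markov property of $(X,\fPi)$ and of the measurability of the concatenated control $\bar u$. The former rests on pathwise uniqueness of \eqref{eq:dynX2}--\eqref{eq:dynpi}: the coefficients are Lipschitz on compacts of $(0,\infty)\times\cS$, and the Wonham filter almost surely never reaches the boundary of the simplex on finite intervals. The latter is the standard measurable-selection step, the execution of which is enabled precisely by the continuity of $V$ already in hand. Both of these technicalities are absorbed into Krylov's proof, which is why citing \cite[Chapter~3, Theorem~6]{krylov1980} is the most economical route and is the one I would take.
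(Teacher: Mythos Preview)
Your proposal is correct and matches the paper's approach exactly: the paper does not give a proof of this proposition but simply cites \cite[Chapter~3, Theorem~6]{krylov1980}, which is precisely the reference you invoke in your first paragraph. Your additional sketch of the two inequalities and the discussion of the strong Markov property and measurable selection go beyond what the paper provides, but the primary route---reducing to Krylov's general dynamic programming theorem after noting that the filtered system \eqref{eq:dynX2}--\eqref{eq:dynpi} is autonomous---is identical.
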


Now, assuming $V \in C^2$, one can derive the associated HJB equation from the Bellman principle:
\begin{align}
 \label{eq:HJB}(\cL - \delta) V +\sup_{u \in [0,K]}(u(1-V_x))=0\,,
\end{align}
where
\begin{align*}
 \cL V&=  \mu_M V_x+\sum_{i=1}^{M-1} \left((\mu_i-\mu_M) \fpi_i\ V_x + \left(q_{Mi} + \sum_{j=1}^{M-1} (q_{ji}-q_{Mi}) \fpi_j  \right) V_{\fpi_i} + \fpi_i \left( \mu_i - \nu \right) V_{x \fpi_i}\right.\\
 &+ \left.\frac{1}{2}\sum_{k=1}^{M-1} \left( \left( \fpi_i \frac{\mu_i-\nu}{\sigma} \right) \left( \fpi_k \frac{\mu_k-\nu}{\sigma} \right) V_{\fpi_i \fpi_k} \right)\right)+ \frac{1}{2} \sigma^2 V_{xx}\,,
\end{align*}
and $\nu$ is given by \eqref{eq:nu}.
The HJB-equation is a second order degenerate-elliptic PDE, since there is only one Brownian motion driving the $M$-dimensional process $(X_t,\fPi_t)_{t \ge 0}$. The supremum in \eqref{eq:HJB} is attained at
\begin{align*}
 u=\begin{cases}
    K, & V_x \le 1\\
    0, & V_x > 1\,.
   \end{cases}
\end{align*}

As boundary conditions we have for $x=0$ and $x\rightarrow \infty$
\begin{align}
\label{eq:bcx1} V(0,\fpi)&=0\,,\\
\label{eq:bcx2} V(x,\fpi)&\rightarrow\frac{K}{\delta} \mbox{ uniformly in }\fpi\mbox{ as }x \rightarrow \infty\,.
\end{align}
For $\fpi_i \in \{0,1\}$, $i=1,\dots,M-1$, we have no boundary conditions available. However, as the filter $\fpi$ never reaches the boundary, boundary conditions in these particular directions are not required for the solution being well-defined.
The reason for this is that we still get uniqueness in the interior and on the relevant part of the boundary, see Corollary \ref{cor:comparison}.\\
It should be mentioned at this place that the solution to the problem with complete information, i.e., for observable $Y$, does not serve as boundary condition.
This is because even if we knew that we started in a certain state, then a moment later we would again not be able to observe the state, whereas in the model suggested by \cite{sotomayor2011},
one is then still able to observe it.

\subsection*{Analytic characterization}
\label{subsec:analytics}

Now we come to the analytic characterization of the optimal value function.
In the Markov switching setup where the current state of the Markov chain is observable (see \cite{sotomayor2011}), the HJB equation can be solved explicitly and the solution is smooth.
In our case the HJB equation \eqref{eq:HJB} is a degenerate-elliptic PDE, which makes the existence of a smooth solution questionable. Thus, we deal with a weaker concept of solutions, namely viscosity solutions. The only required smoothness for this is continuity, however, the concept is still strong enough to prove uniqueness. Furthermore, it is also useful for numerical treatment, see \citet{barles1991}, or \citet[Chapter IX]{fleming2006}. These are two important strengths of the concept of viscosity solutions and make it very beneficial to problems like ours.
Therefore, we are going to characterize the optimal value function $V$ as the unique viscosity solution of \eqref{eq:HJB}.\\

Denote by $\cT:= \{(\fpi_1,\dots,\fpi_{M-1}) \in (0,1)^{M-1}: \sum_{i=1}^{M-1} \fpi_i < 1\}$ the state space of the first $M-1$ dimensions of the filter with closure $\bar \cT:= \{(\fpi_1,\dots,\fpi_{M-1}) \in [0,1]^{M-1}: \sum_{i=1}^{M-1} \fpi_i \le 1\}$ and boundary $\partial \bar \cT$. Further denote
$\Omega:=(0,\infty) \times \cT$, $\bar{\Omega}=[0,\infty)\times \bar \cT$ and let $\partial \bar{\Omega}$ be its boundary. Furthermore, let $\Gamma_-:=(0,\infty)\times\partial \bar \cT \subseteq \partial \bar{\Omega}$.
Then $\Gamma_+:=\partial \bar \Omega \backslash \Gamma_-$ denotes the so-called {\it relevant part} of the boundary.

\begin{definition}\label{def:viscosity}
(viscosity solution)
\begin{enumerate}
 \item A function $w:\bar{\Omega} \rightarrow \R$ is a {\it viscosity subsolution} to \eqref{eq:HJB}, if
 \[
  (\cL-\delta) \phi(\bar{x},\bar{\fpi}) + \sup_{u \in [0,K]}(u(1-\phi_x(\bar{x},\bar{\fpi})))\ge 0
 \]
 for all $(\bar{x},\bar{\fpi}) \in \Omega$ and for all $\phi \in C^2(\Omega)$ such that $w-\phi$ attains a maximum at $(\bar{x},\bar{\fpi})$ with $w(\bar{x},\bar{\fpi})=\phi(\bar{x},\bar{\fpi})$.
 
 \item A function $w:\bar{\Omega} \rightarrow \R$ is a {\it viscosity supersolution} to \eqref{eq:HJB}, if
 \[
  (\cL-\delta) \psi(\bar{x},\bar{\fpi}) + \sup_{u \in [0,K]}(u(1-\psi_x(\bar{x},\bar{\fpi})))\le 0
 \]
 for all $(\bar{x},\bar{\fpi}) \in \Omega$ and for all $\psi \in C^2(\Omega)$ such that $w-\psi$ attains a minimum at $(\bar{x},\bar{\fpi})$ with $w(\bar{x},\bar{\fpi})=\psi(\bar{x},\bar{\fpi})$.
 
\item $w:\bar{\Omega} \rightarrow \R$ is a {\it viscosity solution} to \eqref{eq:HJB},  if it is both a viscosity sub- and supersolution.
\end{enumerate}
\end{definition}
The basic idea of viscosity solutions is to estimate the function from below and from above by smooth test functions.
For details about viscosity solutions, see, e.g., \citet{fleming2006}, or \citet{crandall1992}.\\

The following theorem shows the connection between the solution of the optimization problem and the weak solution of the HJB equation.
\begin{theorem}\label{thm:viscosity}
The optimal value function $V$ is a viscosity solution of \eqref{eq:HJB} with boundary conditions \eqref{eq:bcx1} and \eqref{eq:bcx2}.
\end{theorem}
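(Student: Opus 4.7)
The plan is to verify the two boundary conditions by direct inspection and then establish the viscosity sub- and super-solution properties inside $\Omega$ by the standard contradiction argument: assume the appropriate viscosity inequality fails strictly at an interior test point, and derive a contradiction with the Bellman principle (Proposition \ref{prop:bellman}) via It\^o's formula applied to the smooth test function on a short stopping interval.

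The boundary conditions are immediate. If $x=0$ then $\tau=0$ $\P_{0,\fpi}$-almost surely, hence no dividends can be collected and $V(0,\fpi)=0$; the limit $V(x,\fpi)\to K/\delta$ as $x\to\infty$, uniformly in $\fpi$, is exactly the content of Lemma \ref{lem:x-infinity}.

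For the sub-solution property, fix $(\bar x,\bar{\fpi})\in\Omega$ and $\phi\in C^2(\Omega)$ with $V(\bar x,\bar{\fpi})=\phi(\bar x,\bar{\fpi})$ and $V\le\phi$ locally, and suppose for contradiction that
\[
(\cL-\delta)\phi(\bar x,\bar{\fpi})+\sup_{u\in[0,K]}u\bigl(1-\phi_x(\bar x,\bar{\fpi})\bigr)<0.
\]
By continuity, every summand $(\cL-\delta)\phi+u(1-\phi_x)$ is bounded from above by some $-\eta<0$ on a small closed ball $\bar B\subset\Omega$ around $(\bar x,\bar{\fpi})$, uniformly in $u\in[0,K]$. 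Fix an arbitrary constant control $u\in[0,K]$, set $\tau_B:=\inf\{t\ge 0:(X_t,\fPi_t)\notin B\}$ and $\vartheta:=\tau\wedge\tau_B\wedge h$ for small $h>0$, and apply It\^o's formula to $e^{-\delta t}\phi(X_t,\fPi_t)$ on $[0,\vartheta]$. Since $\phi$ and its derivatives are bounded on $\bar B$, the local-martingale part has zero expectation; combined with $V\le\phi$ on $\bar B$ and the uniform strict inequality this yields
\[
V(\bar x,\bar{\fpi}) > \E_{\bar x,\bar{\fpi}}\!\left[\int_0^{\vartheta}e^{-\delta s}u\,ds+e^{-\delta\vartheta}V(X_\vartheta,\fPi_\vartheta)\right]
\]
for every constant $u\in[0,K]$, which strictly contradicts the Bellman principle.

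The super-solution property is symmetric: fix $\psi\in C^2(\Omega)$ with $V(\bar x,\bar{\fpi})=\psi(\bar x,\bar{\fpi})$ and $V\ge\psi$ locally, and suppose the HJB left-hand side is strictly positive at $(\bar x,\bar{\fpi})$. Pick a maximiser $u^*\in\{0,K\}$ of $u\mapsto u(1-\psi_x(\bar x,\bar{\fpi}))$, and shrink $\bar B$ so that $(\cL-\delta)\psi+u^*(1-\psi_x)\ge\eta>0$ remains valid throughout $\bar B$ (possible because $1-\psi_x$ keeps its sign on a sufficiently small ball, and in the degenerate case $\psi_x(\bar x,\bar{\fpi})=1$ one may take $u^*=0$). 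Using the constant control $u^*$ as admissible strategy, applying It\^o to $e^{-\delta t}\psi(X_t,\fPi_t)$ on $[0,\vartheta]$, and using $V\ge\psi$ on $\bar B$ yields
\[
\E_{\bar x,\bar{\fpi}}\!\left[\int_0^{\vartheta}e^{-\delta s}u^*\,ds+e^{-\delta\vartheta}V(X_\vartheta,\fPi_\vartheta)\right] \ge V(\bar x,\bar{\fpi})+\eta\,\E_{\bar x,\bar{\fpi}}\!\left[\int_0^{\vartheta}e^{-\delta s}\,ds\right] > V(\bar x,\bar{\fpi}),
\]
which contradicts the Bellman inequality $V(\bar x,\bar{\fpi})\ge\E_{\bar x,\bar{\fpi}}[\cdots]$ associated with the admissible control $u^*$.

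The main technical obstacles I anticipate are: (i) the degenerate ellipticity of $\cL$ (only one Brownian motion drives all $M$ state dimensions), which rules out an a priori bound on $\tau_B$ and forces the deterministic cap $h$; (ii) arranging in the super-solution step that the constant $u^*$ actually realises the supremum throughout the shrunken ball, handled by a case distinction on the sign of $1-\psi_x(\bar x,\bar{\fpi})$ as above; and (iii) verifying that the controlled system \eqref{eq:dynX2}--\eqref{eq:dynpi} admits a (weak) solution on $[0,\vartheta]$ for these constant controls, which is standard since on the compact set $\bar B$ the filter drift and diffusion coefficients in \eqref{eq:dynpi} are bounded and locally Lipschitz. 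No boundary condition is imposed on $\Gamma_-$, consistent with the filter never reaching $\partial\bar\cT$; this point only becomes critical in the subsequent uniqueness/comparison result.
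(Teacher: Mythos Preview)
Your overall strategy---contradiction via the Bellman principle and It\^o's formula on a short stopping interval---is standard and sound, but the subsolution step as written has a genuine gap. You apply It\^o only with a \emph{constant} control $u\in[0,K]$ and then claim that the resulting strict inequality
\[
V(\bar x,\bar{\fpi}) > \E_{\bar x,\bar{\fpi}}\!\left[\int_0^{\vartheta}e^{-\delta s}u\,ds+e^{-\delta\vartheta}V(X_\vartheta,\fPi_\vartheta)\right]
\]
for all constants $u$ contradicts Proposition~\ref{prop:bellman}. It does not: the dynamic programming principle takes the supremum over \emph{all} admissible (generally non-constant) controls, so dominating every constant strategy is insufficient. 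The fix is immediate and costs nothing: since your bound $(\cL-\delta)\phi+u(1-\phi_x)\le-\eta$ on $\bar B$ is uniform in $u\in[0,K]$, you may run the same It\^o computation with an arbitrary admissible control $(u_t)_{t\ge0}$, plugging $u_s$ into the integrand pointwise; the uniform bound then yields the strict inequality for every $u\in A$, which does contradict the Bellman principle. Your supersolution step is correct as stated, since there a single admissible control $u^*$ suffices; the parenthetical about the sign of $1-\psi_x$ is unnecessary (continuity of $(\cL-\delta)\psi+u^*(1-\psi_x)$ alone gives the persistence of the lower bound on a small ball), but harmless.

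By way of comparison, the paper does not argue by contradiction. For the supersolution property it proceeds exactly as you do in spirit (constant control, It\^o, divide by the time horizon and let it vanish). For the subsolution property it works directly with an $\varepsilon\eta/2$-optimal control $u^{\varepsilon,\eta}$ in the Bellman principle, approximates it in $L^1$ by a control continuous in $t$, applies the mean value theorem to extract a pointwise value, and passes to the limit via Fatou. Your (corrected) contradiction argument is shorter and avoids the $\varepsilon$-optimal control, the $L^1$ approximation, and the mean value theorem, precisely because the uniform-in-$u$ bound on $\bar B$ lets you handle all admissible controls at once without ever identifying a near-optimal one.
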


\begin{proof}
 We have to show that the optimal value function $V$ is a viscosity sub- and supersolution, cf.~\cite[Proof of Theorem 5.1]{sz12}. \\

\underline{Viscosity supersolution:} Let $\psi \in C^2(\Omega)$, $\psi \le V$ and $(\bar{x},\bar{\fpi})$ such that $V(\bar{x},\bar{\fpi})=\psi(\bar{x},\bar{\fpi})$. Let $\eta >0$.\\
  Applying the dynamic programming principle we get
  \begin{align*}
   \psi(\bar{x},\bar{\fpi})=V(\bar{x},\bar{\fpi}) &= \sup_{u \in A} \E_{\bar{x},\bar{\fpi}} \left( \int_0^{\tau \wedge \eta} e^{-\delta t} u_t \,dt + e^{-\delta(\tau \wedge \eta)}V(X_{\tau \wedge \eta}, \fPi_{\tau \wedge \eta})\right)\\
   &\ge \E_{\bar{x},\bar{\fpi}} \left( u \frac{1-e^{-\delta(\tau \wedge \eta)}}{\delta} + e^{-\delta(\tau \wedge \eta)} \psi (X_{\tau \wedge \eta}, \fPi_{\tau \wedge \eta})\right)
  \end{align*}
  for any fixed $u \in [0,K]$.\\
  Now we apply It\^o's formula to $\psi$, note that the stochastic integrals are martingales, divide by $\eta$ and let $\eta \rightarrow 0$. This yields
  \[
   0 \ge u-\delta \psi (\bar{x},\bar{\fpi}) + \cL \psi (\bar{x},\bar{\fpi}) -u \, \psi_x (\bar{x},\bar{\fpi})\,.
  \]
  As $u$ was arbitrary,
  \[
   0 \ge (\cL-\delta) \psi (\bar{x},\bar{\fpi}) + \sup_{u \in [0,K]} \left(u (1-\psi_x (\bar{x},\bar{\fpi}))\right)\,.
  \]
  Thus, $V$ is a viscosity supersolution.\\
  
\underline{Viscosity subsolution:} Let $\phi \in C^2 (\Omega)$, $\phi \geq V$ and $(\bar{x}, \bar{\fpi})$ such that $\phi (\bar{x}, \bar{\fpi}) = V(\bar{x},\bar{\fpi})$.
  For $\varepsilon > 0$ let $\eta > 0$ and $u^{\varepsilon,\eta}$ be an $\frac{\varepsilon \eta}{2}$-optimal dividend
policy for the first part of the dynamic programming principle, and denote the surplus coming from $u^{\varepsilon,\eta}$ as $X^{\varepsilon,\eta}$. Then 

\begin{align*}
\phi(\bar{x}, \bar{\fpi}) - \frac{\varepsilon \eta}{2} &= V(\bar{x}, \bar{\fpi}) - \frac{\varepsilon \eta}{2} \le \E_{\bar{x},\bar{\fpi}} \left( \int_{0}^{\tau \wedge \eta} e^{- \delta t} u_t^{\varepsilon,\eta} \, d t + e^{-\delta(\tau \wedge \eta)} V(X_{\tau \wedge \eta}^{\varepsilon,\eta}, \fPi_{\tau \wedge \eta})\right)\\
& \le \E_{\bar{x},\bar{\fpi}} \left( \int_0^{\tau \wedge \eta} e^{- \delta t} u_t^{\varepsilon,\eta} dt + e^{-\delta(\tau \wedge \eta)} \phi (X_{\tau \wedge \eta}^{\varepsilon,\eta}, \fPi_{\tau \wedge \eta})\right)\\
&= \E_{\bar{x},\bar{\fpi}} \left( \int_{0}^{\tau \wedge \eta} e^{-\delta t} u_t^{\varepsilon,\eta} dt + e^{- \delta (\tau \wedge \eta)} \left(\phi (\bar{x}, \bar{\fpi})+ \int_0^{\tau \wedge \eta} \cL \phi \, dt - \int_0^{\tau \wedge \eta} \phi_x u_t^{\varepsilon,\eta} \, dt \right)\right)\\
&\le \E_{\bar{x},\bar{\fpi}} \left( \int_{0}^{\tau \wedge \eta} e^{-\delta t} \hat u_t^{\varepsilon,\eta}\, dt + e^{- \delta (\tau \wedge \eta)} \left(\phi (\bar{x}, \bar{\fpi})+ \int_0^{\tau \wedge \eta} \cL \phi \, dt - \int_0^{\tau \wedge \eta} \phi_x \hat u_t^{\varepsilon,\eta} \, dt \right)\right)+\frac{\varepsilon \eta}{2}\,,
\end{align*}
where $\hat u^{\varepsilon,\eta}$ is continuous in $t$, has values in $[0,K]$, and approximates $u^{\varepsilon, \eta}$ in $L^1([0,\eta),\lambda)$.
Furthermore, we applied It\^o's formula and used that the stochastic integrals are martingales.
Rearranging the inequality and dividing by $\eta$ yields
\begin{align*}
-\varepsilon  &\le \E_{\bar{x},\bar{\fpi}} \left( \frac{e^{- \delta (\tau \wedge \eta)}-1}{\eta} \phi (\bar{x}, \bar{\fpi})+\frac{ e^{- \delta (\tau \wedge \eta)}}{\eta} \int_0^{\tau \wedge \eta} \cL \phi \, dt
+ \frac{1}{\eta}\int_0^{\tau \wedge \eta} \left(e^{-\delta t}-e^{- \delta (\tau \wedge \eta)}\phi_x\right) \hat u_t^{\varepsilon,\eta} \, dt \right)\,.
\end{align*}
Now we apply the mean value theorem:
\begin{align*}
-\varepsilon  &\le \E_{\bar{x},\bar{\fpi}} \left( \frac{e^{- \delta (\tau \wedge \eta)}-1}{\eta} \phi (\bar{x}, \bar{\fpi})+\frac{e^{- \delta (\tau \wedge \eta)}}{\eta}  \int_0^{\tau \wedge \eta} \cL \phi \, dt
+ \frac{\tau \wedge \eta}{\eta}\left(e^{-\delta \xi}-e^{- \delta (\tau \wedge \eta)}\phi_x\right) \hat u_\xi^{\varepsilon,\eta} \right)\,,
\end{align*}
and let $\eta \rightarrow 0$ along a sequence:
\begin{align*}
-\varepsilon  &\le (\cL-\delta) \phi (\bar{x}, \bar{\fpi})+ \limsup_{\eta \to 0} \E_{\bar{x},\bar{\fpi}}\left( \frac{\tau \wedge \eta}{\eta}\left(e^{-\delta \xi}-e^{- \delta (\tau \wedge \eta)}\phi_x\right) \hat u_\xi^{\varepsilon,\eta} \right)\,.
\end{align*}
Fatou's lemma gives
\begin{align*}
& \limsup_{\eta \to 0}  \E_{\bar{x},\bar{\fpi}}\left( \frac{\tau \wedge \eta}{\eta}\left(e^{-\delta \xi}-e^{- \delta (\tau \wedge \eta)}\phi_x\right) \hat u_\xi^{\varepsilon,\eta} \right)
\le 
 \E_{\bar{x},\bar{\fpi}}\left( \limsup_{\eta \to 0} \frac{\tau \wedge \eta}{\eta}\left(e^{-\delta \xi}-e^{- \delta (\tau \wedge \eta)}\phi_x\right) \hat u_\xi^{\varepsilon,\eta} \right)\\
 &= \E_{\bar{x},\bar{\fpi}}\left( (1-\phi_x(\bar{x},\bar{\fpi})) \left(  \limsup_{\eta \to 0} \hat u_\xi^{\varepsilon,\eta} \,1_{\{1-\phi_x(\bar{x},\bar{\fpi}) \ge 0\}} +  \liminf_{\eta \to 0} \hat u_\xi^{\varepsilon,\eta} \,1_{\{1-\phi_x(\bar{x},\bar{\fpi}) < 0\}} \right)\right)
 = \tilde u(\bar x, \bar \fpi)(1-\phi_x(\bar{x},\bar{\fpi}))\,,
\end{align*}
where $\tilde u(\bar x, \bar \fpi)= \left(  \limsup_{\eta \to 0} \hat u_\xi^{\varepsilon,\eta} \,1_{\{1-\phi_x(\bar{x},\bar{\fpi}) \ge 0\}} +  \liminf_{\eta \to 0} \hat u_\xi^{\varepsilon,\eta} \,1_{\{1-\phi_x(\bar{x},\bar{\fpi}) < 0\}}\right) $.
As $\varepsilon>0$ was arbitrary,
\[
(\cL - \delta) \phi(\bar{x}, \bar{\fpi}) + \tilde u(\bar x, \bar \fpi)(1-\phi_x(\bar{x}, \bar{\fpi})) \geq 0\,.
\]
Since $\tilde u(\bar x, \bar \fpi)(1-\phi_x(\bar{x}, \bar{\fpi})) \leq \sup_{u \in [0, K]} u (1- \phi_x(\bar{x}, \bar{\fpi}))$, we get 
\[
(\cL - \delta) \phi(\bar{x}, \bar{\fpi}) + \sup_{u \in [0, K]} u (1- \phi_x(\bar{x}, \bar{\fpi})) \geq 0\,.
\]
Thus, V is also a viscosity subsolution.\\

Altogether, $V$ is a viscosity solution.
\end{proof}

Now it remains to prove uniqueness. For this, one has to prove a weak maximum principle, which in standard proofs results in the statement that if two viscosity solutions are equal on the boundary, they are also equal in the interior of the domain. However, as mentioned above, we have no boundary conditions available in the $\fpi$ directions. But \citet{lions1983b} shows that if the underlying stochastic process does not reach some parts of the boundary of the domain with a positive probability, then as these parts are not reached anyway, the study can be restricted to the interior and the relevant part of the boundary.

\begin{theorem}[Comparison]\label{thm:comparison}
Let $w_1$ and $w_2$ be bounded and continuous viscosity solutions of \eqref{eq:HJB}.\\ 
If $w_1\leq w_2$ on $\Gamma_+$ and $\lim_{x\to\infty} (w_1-w_2)(x,\fpi)\le 0$ uniformly in $\fpi$,
then $w_1\leq w_2$ on $\Omega$.
\end{theorem}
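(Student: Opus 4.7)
The plan is to follow the standard doubling-of-variables technique for comparing viscosity solutions, combined with an auxiliary penalty in the $\fpi$-directions to compensate for the absent boundary data on $\Gamma_-$, in the spirit of \citet{lions1983b}. First I would reduce to a bounded slab in $x$: by boundedness of $w_1,w_2$ and the hypothesis $\lim_{x\to\infty}(w_1-w_2)(x,\fpi)\le 0$ uniformly in $\fpi$, for every $\varepsilon>0$ there exists $R=R(\varepsilon)$ with $w_1-w_2\le\varepsilon$ on $\{x\ge R\}\times \bar\cT$, so it suffices to prove $w_1\le w_2+2\varepsilon$ on $(0,R)\times\cT$ and then let $\varepsilon\to 0$.

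Next I would construct a strict supersolution $\Psi\in C^2(\Omega)$ of $(\cL-\delta)\Psi\le -\kappa<0$ that is positive, bounded in $x$, and blows up as $(x,\fpi)\to\Gamma_-$; an entropy-type candidate is
\[
\Psi(x,\fpi)=-\sum_{i=1}^{M-1}\log \fpi_i - \log\!\Bigl(1-\sum_{i=1}^{M-1}\fpi_i\Bigr)+C,
\]
whose second $\fpi$-derivatives interact with the filter's vanishing diffusion coefficients $\fpi_i(\mu_i-\nu)/\sigma$ so that $\cL\Psi$ stays bounded above while $-\delta\Psi\to-\infty$ as $\fpi\to\partial\bar\cT$. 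Assuming for contradiction that $m:=\sup_{\Omega}(w_1-w_2)>3\varepsilon$, for $\alpha,\gamma>0$ I would then study
\[
\Phi_{\alpha,\gamma}(x,y,\fpi,\bar\fpi)=w_1(x,\fpi)-w_2(y,\bar\fpi)-\alpha\bigl(|x-y|^2+|\fpi-\bar\fpi|^2\bigr)-\gamma\bigl(\Psi(x,\fpi)+\Psi(y,\bar\fpi)\bigr)
\]
on $[0,R]^2\times\bar\cT^2$. For $\gamma,\varepsilon$ small and $\alpha$ large, the maximum is attained at some interior point $(x_\alpha,y_\alpha,\fpi_\alpha,\bar\fpi_\alpha)$: the $\Psi$-term pushes away from $\Gamma_-$, while the bounds $w_1\le w_2+\varepsilon$ on $\{x=0\}\cup\{x=R\}$ rule out the attainable boundary. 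The Crandall-Ishii lemma then produces second-order jets of $w_1$ at $(x_\alpha,\fpi_\alpha)$ and of $w_2$ at $(y_\alpha,\bar\fpi_\alpha)$ satisfying the usual compatibility; feeding them into the viscosity inequalities, subtracting, and using $(\cL-\delta)\Psi\le-\kappa$ together with the boundedness of $(\mu_i-\nu)/\sigma$, one obtains $\delta m\le -2\gamma\kappa+o(1)$ as $\alpha\to\infty$, a contradiction; letting $\gamma,\varepsilon\to 0$ then gives the claim.

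The main obstacle is the rigorous construction of $\Psi$. Because the diffusion matrix of the joint process $(X,\fPi)$ has rank one (a single driving Brownian motion $W$), the standard trace estimates that keep the Crandall-Ishii bookkeeping transparent do not immediately apply, and one must exploit the very specific vanishing of the filter's diffusion coefficients at $\partial\bar\cT$ together with the inward-pointing Wonham drift $q_{Mi}+\sum_j(q_{ji}-q_{Mi})\fpi_j$ on the relevant faces of $\bar\cT$. This is precisely the step at which the perspective of \citet{lions1983b} enters: the filter $\fPi$ almost surely never reaches $\Gamma_-$, so a penalty $\Psi$ can be chosen that beats the degeneracy at $\Gamma_-$ without spoiling the strict supersolution property in the interior. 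Once such a $\Psi$ is in hand, the matching of test-function derivatives and the passage to the limit are essentially routine.
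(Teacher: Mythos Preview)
Your approach is sound in principle but takes a genuinely different route from the paper's proof. The paper does not carry out the doubling-of-variables argument at all: it simply verifies that the Wonham filter never reaches $\Gamma_-$ (citing \cite[Corollary~2.2]{chigansky2007}), which means $\P(\tau'<\infty,\,(X_{\tau'},\fPi_{\tau'})\in\Gamma_-)=0$, and then applies \cite[Corollary~II.1]{lions1983b} directly. That is the entire argument---two citations and one line.

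What you propose is essentially to reconstruct, for this specific equation, the mechanism behind Lions' general result: build an explicit barrier $\Psi$ that blows up at $\Gamma_-$, penalize with $\gamma\Psi$, and run Crandall--Ishii. Your logarithmic barrier is in fact the right choice here; a short computation shows that the filter diffusion $\fpi_i(\mu_i-\nu)/\sigma$ cancels the $1/\fpi_i^2$ singularities in $\Psi_{\fpi_i\fpi_k}$, and the Wonham drift $\sum_j q_{ji}\fpi_j$ has the right sign against $\Psi_{\fpi_i}=-1/\fpi_i+1/\fpi_M$, so $\cL\Psi$ is bounded above on $\cT$ and $(\cL-\delta)\Psi\le-\kappa$ for $C$ large. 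The rank-one degeneracy is not a real obstacle to the Crandall--Ishii step because the diffusion matrix factors as $\sigma(x,\fpi)\sigma(x,\fpi)^\top$ with a smooth $\sigma$, which is exactly the structure the Ishii trick handles. So your ``main obstacle'' is in fact surmountable.

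The trade-off is clear: the paper's proof is a two-line reduction to existing literature, while yours is self-contained and makes explicit why the absence of boundary data on $\Gamma_-$ is harmless. If you want to present a complete argument, do check the sign bookkeeping in your final inequality; as written, $\delta m\le -2\gamma\kappa+o(1)$ gives only $\delta m\le 0$ after $\gamma\to 0$, whereas the actual contradiction comes from comparing $w_1-w_2-2\gamma\Psi$ at the interior maximum with its values on $\{x=0\}\cup\{x=R\}$.
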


\begin{proof}
 Define $\tau':=\inf\{t \ge 0 \vert (X_t,\fPi_t) \in \partial \bar \Omega\}$. We need to check whether
 $\P(\tau'<\infty \,,\,\, (X_{\tau'},\fPi_{\tau'}) \in \Gamma_-)=0$.
 From \cite[Corollary 2.2]{chigansky2007} we know that the Wonham filter never reaches the boundary. Therefore, the above probability is indeed zero.
Hence, we may apply \cite[Corollary II.1]{lions1983b}, which proves the statement.
\end{proof}

Uniqueness of the solution of \eqref{eq:HJB} now follows as a corollary.
\begin{cor}\label{cor:comparison}
The optimal value function $V$ is the unique bounded viscosity solution of \eqref{eq:HJB} on $\Omega \cup \Gamma_+$ with boundary conditions \eqref{eq:bcx1} and \eqref{eq:bcx2}.
\end{cor}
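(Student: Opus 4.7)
The plan is to deduce the statement directly from the existence result in Theorem~\ref{thm:viscosity} and the comparison principle in Theorem~\ref{thm:comparison}. Existence is already at hand: Theorem~\ref{thm:viscosity} shows that $V$ solves \eqref{eq:HJB} together with \eqref{eq:bcx1} and \eqref{eq:bcx2} in the viscosity sense, while Lemma~\ref{lem:x-infinity} guarantees that $V$ is continuous and bounded between $0$ and $K/\delta$. It therefore remains only to argue uniqueness.

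For uniqueness I would take an arbitrary bounded continuous viscosity solution $\tilde V$ of \eqref{eq:HJB} that satisfies \eqref{eq:bcx1} and \eqref{eq:bcx2}, and compare it with $V$ by invoking Theorem~\ref{thm:comparison} twice. Observe first that the relevant part of the boundary is $\Gamma_+ = \{0\}\times \bar \cT$, and by \eqref{eq:bcx1} both $V$ and $\tilde V$ vanish there, so $V \le \tilde V$ and $\tilde V \le V$ hold on $\Gamma_+$. At infinity, \eqref{eq:bcx2} yields $\lim_{x\to\infty}(V-\tilde V)(x,\fpi)=0$ uniformly in $\fpi$, and symmetrically for $\tilde V - V$. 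These are exactly the hypotheses of Theorem~\ref{thm:comparison}: one application gives $V \le \tilde V$ on $\Omega$ and the other gives $\tilde V \le V$ on $\Omega$. Equality then extends to $\Gamma_+$ by the common boundary value, so $V = \tilde V$ on $\Omega \cup \Gamma_+$.

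The hard part has in fact already been absorbed into Theorem~\ref{thm:comparison}: the absence of boundary data on $\Gamma_-$ is handled there by combining the non-reachability of $\partial \bar \cT$ by the Wonham filter (via \cite[Corollary 2.2]{chigansky2007}) with \cite[Corollary II.1]{lions1983b}. Once the comparison theorem is granted, the corollary reduces to the routine \emph{apply comparison twice} argument and presents no further technical obstacle.
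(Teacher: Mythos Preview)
Your proposal is correct and matches the paper's intended argument: the paper states the corollary without proof, simply noting that uniqueness follows from Theorem~\ref{thm:comparison}, and your ``apply comparison twice'' argument is precisely how one fills in that step. Your identification of $\Gamma_+=\{0\}\times\bar\cT$ and the verification of the hypotheses of Theorem~\ref{thm:comparison} via \eqref{eq:bcx1}, \eqref{eq:bcx2}, and Lemma~\ref{lem:x-infinity} are all correct.
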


The following theorem shows that our analytic characterization includes smooth solutions to the HJB equation. Furthermore, it can be concluded that if there is a dividend policy leading to a smooth value function that solves the HJB equation in the viscosity sense, then this policy is optimal.

\begin{theorem}\label{thm:verification}
Let $w$ be a viscosity supersolution of \eqref{eq:HJB} with
boundary conditions \eqref{eq:bcx1} and \eqref{eq:bcx2}, and $w \in C^2$ almost everywhere. Then $V \le w$.
\end{theorem}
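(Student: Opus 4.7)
The plan is a verification argument: fix an admissible strategy $u \in A$ with associated state process $(X_t, \fPi_t)_{t \ge 0}$ and ruin time $\tau$, then show $w(x, \fpi) \ge J^{(u)}(x, \fpi)$; taking the supremum over $u \in A$ delivers $V \le w$. The crucial first observation is that, wherever $w$ is $C^2$, one may use $\psi = w$ locally as test function in Definition \ref{def:viscosity}, so the viscosity supersolution property collapses to the classical pointwise inequality
\[
(\cL - \delta) w(y, \fpi) + \sup_{u' \in [0,K]} u'\bigl(1 - w_x(y, \fpi)\bigr) \le 0,
\]
which therefore holds Lebesgue-almost everywhere on $\Omega$ by the hypothesis $w \in C^2$ a.e.

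Next I would apply It\^o's formula to $e^{-\delta t} w(X_t, \fPi_t)$ on $[0, \tau \wedge T]$ for fixed $T > 0$. Since $w$ is only $C^2$ a.e., this step requires a mollification: take a standard smoothing $w^\varepsilon \in C^2(\Omega)$, apply the classical It\^o formula to $w^\varepsilon$, and pass to the limit $\varepsilon \downarrow 0$ using a.e.\ convergence of $w^\varepsilon$ and its first two derivatives to those of $w$, together with dominated convergence for the bounded integrands. The martingale part has zero expectation (by localization and boundedness of $w$), so combining with the supersolution inequality above and using $(\cL - \delta) w - u_s w_x + u_s \le 0$ a.e., one obtains
\[
w(x, \fpi) \ge \E_{x, \fpi} \!\left[\int_0^{\tau \wedge T} e^{-\delta s} u_s \, ds + e^{-\delta(\tau \wedge T)} w(X_{\tau \wedge T}, \fPi_{\tau \wedge T})\right].
\]

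Finally, let $T \to \infty$. On $\{\tau < \infty\}$, continuity of $w$ together with the boundary condition $w(0, \fpi) = 0$ gives $w(X_\tau, \fPi_\tau) = 0$; on $\{\tau = \infty\}$, the bound $w \le K/\delta$ combined with the discount factor yields $e^{-\delta T} w(X_T, \fPi_T) \to 0$ almost surely. Dominated convergence then delivers $w(x, \fpi) \ge J^{(u)}(x, \fpi)$, and taking the supremum over admissible controls completes the proof.

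The main obstacle is justifying the It\^o expansion under the weak regularity $w \in C^2$ a.e.\ against the degenerate diffusion $(X, \fPi)$ (only one Brownian motion driving an $M$-dimensional state), for which the classical Krylov occupation-density bounds do not directly apply. The mollification sketched above succeeds provided the Lebesgue-null set on which $w$ fails to be $C^2$ is visited by $(X_s, \fPi_s)$ on a set of product measure zero in $(s, \omega)$; this can be extracted from the structure of the filter dynamics, since on the interior simplex $\cT$ the diffusion coefficients $\pi_i(\mu_i - \nu_s)/\sigma$ are non-degenerate and the $X$-direction is non-degenerate through $\sigma$, together providing enough smoothing to absorb the null set along trajectories. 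Once this is in place, the verification inequality follows routinely.
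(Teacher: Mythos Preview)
Your proposal is correct and follows essentially the same route as the paper: mollify $w$ (the paper uses the Gauss--Weierstrass kernel explicitly), apply It\^o's formula to the smooth approximation on $[0,\tau\wedge T]$, use the a.e.\ classical supersolution inequality to bound the drift term, pass to the limit in the mollification parameter by dominated convergence, and then send $T\to\infty$ using boundedness of $w$. The technical obstacle you flag in your last paragraph is handled in the paper not via occupation-density arguments but by asserting convergence $\cL\varphi^n\to\cL w$ (from standard convolution results), so that the a.e.\ inequality for $w$ yields an $\varepsilon$-approximate pointwise inequality for the mollified $\varphi^n$, which can then be integrated along trajectories directly.
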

 
 \begin{proof}
 The proof runs along the same lines as \cite[Proof of Theorem 5.3]{sz12}.
We begin with convoluting $w$ with a Gauss Weierstrass kernel function. Due to notational ambiguities, we remark that here the area of a circle with radius $1$ is denoted by $\pi$.
Let $\varphi(x,\fpi) := \frac{1}{\pi^\frac{M}{2}} e^{-\left( x^2 + \sum_{i=1}^{M-1} \fpi_i^2 \right) }$ and
\[
 \varphi^n(x,\fpi) := n^M \int_{-\infty}^\infty \int_{- \infty}^\infty \dots \int_{- \infty}^\infty w(x-s,(\fpi_1-t_1,\dots \fpi_{M-1}-t_{M-1})) \varphi(ns,nt)\, ds\, dt_1\, \dots \, dt_{M-1}
\]
for $n \in \N$, where $nt=(nt_1,\dots,nt_{M-1})$.
Clearly, as $n \to \infty$,
$\varphi^n \to w$ and $\cL \varphi^n \to \cL w$, see \citet{wheeden1977}.\\

For an admissible strategy $u=(u_t)_{t \ge 0}$ and $T>0$,
 \begin{align*}
e^{-\delta (T \wedge \tau)} \varphi^n (X_{T \wedge \tau},\fPi_{T \wedge \tau})
=& \varphi^n(x,\fpi) + \int_0^{T \wedge \tau} e^{-\delta t}\, d\varphi^n(X_t,\fPi_t) + \int_0^{T \wedge \tau} \varphi^n(X_t,\fPi_t)\, d(e^{-\delta t})\\
=& \varphi^n(x,\fpi) + \int_0^{T \wedge \tau}  e^{-\delta t} \left[ -\delta \varphi^n(X_t,\fPi_t) + \cL \varphi^n (X_t, \fPi_t) - u_t \varphi_x^n (X_t, \fPi_t) \right]\,dt
+ M_t\,,
\end{align*}

where $M=(M_t)_{t\ge0}$ is a martingale. Therefore,
\begin{align*}
 \E_{x,\fpi} \left( e^{-\delta (T \wedge \tau)} \varphi^n (X_{T \wedge \tau},\fPi_{T \wedge \tau})\right)
= \varphi^n(x,\fpi) + \E_{x,\fpi} \left( \int_0^{T \wedge \tau}  e^{-\delta t} \left[ -\delta \varphi^n(X_t,\fPi_t) + \cL \varphi^n (X_t, \fPi_t) - u_t \varphi_x^n (X_t, \fPi_t) \right]\,dt\right)\,.
\end{align*}

Note that $w$ fulfills
\[
 -\delta w + \cL w + (1-w_x) u \le 0 \quad \mbox{a.e.}
\]
Thus, for $\varepsilon>0$ we can choose  $n$ large enough such that 
\[
 -\delta \varphi^n + \cL \varphi^n + (1-\varphi^n_x) u \le \varepsilon\,,
\]
and hence
\[
 \cL \varphi^n \le \delta \varphi^n - (1-\varphi^n_x) u + \varepsilon\,.
\]
Therefore,
\begin{align*}
&\E_{x,\fpi} \left( e^{-\delta (T \wedge \tau)} \varphi^n (X_{T \wedge \tau},\fPi_{T \wedge \tau})\right)\\
&\le \varphi^n(x,\fpi) + \E_{x,\fpi} \left( \int_0^{T \wedge \tau}  e^{-\delta t} \left[ -\delta \varphi^n(X_t,\fPi_t) + \delta \varphi^n(X_t,\fPi_t) - (1-\varphi_x^n(X_t,\fPi_t)) u_t + \varepsilon - u_t \varphi_x^n (X_t, \fPi_t) \right]\,dt\right)\\
&= \varphi^n(x,\fpi) - \E_{x,\fpi} \left( \int_0^{T \wedge \tau}  e^{-\delta t} u_t \,dt - \varepsilon \int_0^{T \wedge \tau}  e^{-\delta t} \,dt\right)\,.
\end{align*}

By dominated convergence we have for $n\rightarrow\infty$
\begin{align*}
\E_{x,\fpi} \left( e^{-\delta (T \wedge \tau)} w (X_{T \wedge \tau},\fPi_{T \wedge \tau})\right)
\le w(x,\fpi) - \E_{x,\fpi} \left( \int_0^{T \wedge \tau}  e^{-\delta t} u_t \,dt - \varepsilon \int_0^{T \wedge \tau}  e^{-\delta t} \,dt\right)\,.
\end{align*}
As $\varepsilon$ was arbitrary,
\[
\E_{x,\fpi} \left( e^{-\delta (T \wedge \tau)} w (X_{T \wedge \tau},\fPi_{T \wedge \tau})\right) \le w(x,\fpi) - \E_{x,\fpi} \left( \int_0^{T \wedge \tau}  e^{-\delta t} u_t \,dt\right)\,,
\]
and hence
\[
\E_{x,\fpi} \left( e^{-\delta (T \wedge \tau)} w (X_{T \wedge \tau},\fPi_{T \wedge \tau})\right) + \E_{x,\fpi} \left( \int_0^{T \wedge \tau}  e^{-\delta t} u_t \,dt\right) \le w(x,\fpi)\,.
\]

Since $w$ is bounded we have that $\lim_{T \to \infty} \E_{x,\fpi} \left(
e^{-\delta (T \wedge \tau)} w (X_{T \wedge \tau},\fPi_{T \wedge \tau})\right)
= 0$. Thus, by bounded convergence, 
\[
  J^{(u)}(x,\fpi)=\E_{x,\fpi} \left( \int_0^{\tau}  e^{-\delta t} u_t \,dt\right) 
= \lim_{T \to \infty} \E_{x,\fpi} \left( \int_0^{\tau\wedge T}  e^{-\delta t} u_t \,dt\right) 
\le w(x,\fpi)\,.
\]
Since for each $u$, $w$ dominates the optimal value function, by taking the supremum over all $u \in [0,K]$ in the derivation, we get
$V(x,\fpi) \le w(x,\fpi)$.
\end{proof}

\begin{remark}
If there is a strategy $\tilde{u} \in A$ such that $J^{(\tilde{u})}$ is a viscosity supersolution with $J^{(\tilde{u})} \in C^2$ almost everywhere,
then by Theorem \ref{thm:verification}, $J^{(\tilde{u})}=V$ is the classical solution to the problem, and $\tilde{u}$ is the optimal policy.
\end{remark}


\section{Numerics}
\label{sec:Num}

In this section we first simulate a path of the Markov chain $Y$ and a path of the Wonham filter to get an idea of its behaviour. Then we describe a numerical procedure for computing approximations to $V$ and the optimal dividend policy. We will restrict our numerical analysis to the case $M=2$. For a better understanding of the numerical results we transform the state process and consider $(X_t,\nu_t)_{t \ge 0}$, where $\nu_t=\mu_1 \pi_1(t) + \mu_2 (1-\pi_1(t))$, and the corresponding transformed HJB equation, instead of considering $(X_t,\pi_1(t))_{t \ge 0}$.\\

For simulating paths of the Wonham filter we need to express $W$ in terms of $Z$. The representation follows from \eqref{eq:W}:
$dW_t=\frac{dZ_t-\nu_t dt}{\sigma}$. 
With this and equations \eqref{eq:estimator} and \eqref{eq:wonham} we get
\begin{align}\label{eq:nu_sim}
 d \nu_t  &= \left( q_{11}(\nu_t-\mu_2) + q_{21} (\mu_1-\nu_t) - \frac{(\nu_t-\mu_2)(\mu_1-\nu_t)\nu_t}{\sigma^2} \right) dt + \frac{(\nu_t-\mu_2)(\mu_1-\nu_t)}{\sigma^2} dZ_t\,,\\
 \nu_0 &=:\upsilon=\mu_2 + p_1 (\mu_1-\mu_2)\,.
\end{align}

Now we simulate the increments of the Brownian motion $B$ as $\sqrt{\Delta t}\,\ub_t$, where $\ub_t \sim \cN(0,1)$. Furthermore, we simulate a path $\uy$ of the Markov chain $Y$, and calculate $d\uz_{t+\Delta t} = \mu (\uy_t) \Delta t + \sigma \sqrt{\Delta t}\,\ub_t$. With this and equation \eqref{eq:nu_sim} we are ready to calculate the path of the estimator
\begin{align*}
 d \unu_{t+\Delta t}  = \left( q_{11}(\unu_t-\mu_2) + q_{21} (\mu_1-\unu_t) - \frac{(\unu_t-\mu_2)(\mu_1-\unu_t)\unu_t}{\sigma^2} \right) \Delta t + \frac{(\unu_t-\mu_2)(\mu_1-\unu_t)}{\sigma^2} d\uz_{t+\Delta t}
\end{align*}
by applying the Euler-Maruyama scheme.

Figure \ref{fig:wonhamplot} shows a path of the drift of the uncontrolled process governed by the underlying Markov chain, and its estimator. We see that the estimator always needs some time to notice the change in the drift and only adapts to it slowly, but this clearly depends on the choice of $Q$. Furthermore, we see that the estimator indeed does not reach the boundary.\\
\begin{figure}[ht]
\begin{center}
\includegraphics[scale=0.6]{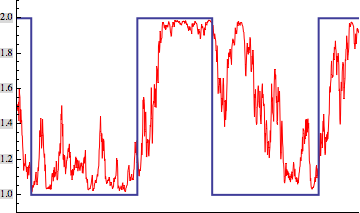}
\end{center}
\caption{Wonham filter estimate (red) of the drift (blue).}\label{fig:wonhamplot}
\end{figure}

Now we are going to solve the HJB equation numerically. The transformed HJB equation is
\begin{align}
 \label{eq:HJBnu}(\tilde \cL - \delta) V +\sup_{u \in [0,K]}(u(1-V_x))=0\,,
\end{align}
where
\begin{align*}
 \tilde \cL V&=  \upsilon V_x + \left(q_{21}(\mu_1-\upsilon)+q_{11}(\upsilon-\mu_2)\right) V_{\upsilon} + (\mu_1-\upsilon)(\upsilon-\mu_2) V_{x \upsilon}
 + \frac{1}{2 \sigma^2} (\mu_1-\upsilon)^2(\upsilon-\mu_2)^2 V_{\upsilon \upsilon} + \frac{1}{2} \sigma^2 V_{xx}\,.
\end{align*}

To solve this PDE numerically, we first introduce an approximation based on an idea from \cite{frey2014} to the HJB equation to ensure convergence of the scheme.
Define
\begin{equation}\label{eq:apprunderl}
\begin{aligned}
  X^{(u),\epsilon}_t&= x+\int_0^t (\nu_s^{\epsilon}-u_s) \, ds + \sigma W_t\,,\\
  \nu_t^{\epsilon} &=\upsilon +\int_0^t \left( q_{11}(\nu_s^{\epsilon}-\mu_2) + q_{21} (\mu_1-\nu_s^{\epsilon}) \right) \,ds + \int_0^t\frac{(\nu_s^{\epsilon}-\mu_2)(\mu_1-\nu_s^{\epsilon})}{\sigma}\, dW_s+2\int_0^t \sqrt{\bar \epsilon(\nu_s^{\epsilon})} \,d\widetilde W_s\,,
 \end{aligned}
 \end{equation}
 and $\tau^{\epsilon}=\inf\{t\ge0 | X_t^{(u),\epsilon} \le 0\}$. $\widetilde W=(\widetilde W_t)_{t \ge0}$ is a Brownian motion independent of $W$, and $\bar \epsilon$ is a smooth function that is bounded, has bounded derivatives, and is vanishing at $\mu_1,\mu_2$ and $\bar \epsilon(\upsilon)=\epsilon$ on $(\mu_1+\zeta,\mu_2-\zeta)$ for some small $\zeta>0$. Furthermore, denote by $J^{(u),\epsilon}, V^\epsilon$ the value function and the optimal value function associated to the approximate underlying system \eqref{eq:apprunderl}.
 This introduces an additional second order term in the approximate HJB equation
 \begin{align}
 \label{eq:HJBnuappr}(\tilde \cL+\bar \epsilon V_{\upsilon \upsilon}^\epsilon - \delta) V^\epsilon +\sup_{u \in [0,K]}(u(1-V^\epsilon_x))=0\,.
\end{align}
The difference to the approximation in \cite{frey2014} is that here we do not regularize our HJB equation with the additional term, but just use it to ensure positivity of the scheme in the interior of the computation domain. Note however that the analytic characterization of the optimal value function herein does not require regularization.\\
The following theorem states that the solution still converges to the optimal value function.

\begin{theorem}
 Let $V^\epsilon$ be the solution to \eqref{eq:HJBnuappr}. Then $\lim_{\epsilon \to 0}\|V^\epsilon-V\|=0$.
\end{theorem}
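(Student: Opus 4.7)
The plan is to reduce convergence of the approximate value function $V^\epsilon$ to a uniform-in-control convergence of the associated stochastic payoff functionals. Because the coefficients of \eqref{eq:apprunderl} are Lipschitz on the invariant interval $[\mu_2,\mu_1]$ (the vanishing of $\bar\epsilon$ at the endpoints keeps $\nu^\epsilon$ inside $[\mu_2,\mu_1]$ just like the original filter), standard verification yields $V^\epsilon(x,\upsilon)=\sup_{u\in A} J^{(u),\epsilon}(x,\upsilon)$ with $J^{(u),\epsilon}(x,\upsilon)=\E_{x,\upsilon}\bigl(\int_0^{\tau^\epsilon} e^{-\delta t}u_t\,dt\bigr)$. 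Combined with $V=\sup_{u\in A} J^{(u)}$ and the elementary inequality $|\sup_u f_u-\sup_u g_u|\le\sup_u|f_u-g_u|$, it suffices to prove
\[
\sup_{u\in A}\bigl|J^{(u),\epsilon}(x,\upsilon)-J^{(u)}(x,\upsilon)\bigr|\xrightarrow{\epsilon\to 0}0
\]
locally uniformly in $(x,\upsilon)$.

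First I would couple both systems on a single probability space, driving them by the same Brownian motion $W$ with $\widetilde W$ independent and carrying the perturbation. Applying It\^o's formula to $|\nu^\epsilon_t-\nu_t|^2$, using Lipschitz continuity of the common filter coefficients on $[\mu_2,\mu_1]$, and bounding the new contribution via $\langle 2\int_0^{\cdot}\sqrt{\bar\epsilon(\nu^\epsilon_s)}\,d\widetilde W_s\rangle_t \le 4\epsilon t$, a Gronwall argument yields
\[
\E\sup_{t\le T}|\nu^\epsilon_t-\nu_t|^2\le C(T)\,\epsilon,
\]
with $C(T)$ independent of $u$ since $u$ does not enter the filter SDE. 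Because $X^{(u),\epsilon}$ and $X^{(u)}$ share the same $W$ and the same $u$, their difference collapses to the path identity $X^{(u),\epsilon}_t-X^{(u)}_t=\int_0^t(\nu^\epsilon_s-\nu_s)\,ds$, so the surplus converges in $L^2$ on compact time intervals, uniformly in $u\in A$.

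The hard part is then passing from pathwise convergence of the surplus to convergence of the ruin times $\tau^\epsilon\to\tau$ uniformly in $u$. Here I would exploit the non-degenerate Brownian component $\sigma W$ of $X^{(u)}$: since the drift $\nu-u$ is uniformly bounded over $u\in A$, a Girsanov change of measure dominates the law of $X^{(u)}$ on any finite horizon, uniformly in $u$, against a Brownian motion with fixed bounded drift. This rules out tangential contact with $0$ uniformly in $u$ and implies $\tau^\epsilon\to\tau$ in probability, uniformly in $u$.

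To conclude, I would truncate and apply dominated convergence through an estimate of the form
\[
|J^{(u),\epsilon}-J^{(u)}|\le K\,\E|\tau^\epsilon\wedge T_0-\tau\wedge T_0|+\tfrac{2K}{\delta}e^{-\delta T_0},
\]
choosing $T_0$ large so that the discounted tail is arbitrarily small and then letting $\epsilon\to 0$ in the first term. This yields $\sup_{u\in A}|J^{(u),\epsilon}-J^{(u)}|\to 0$, hence $\|V^\epsilon-V\|\to 0$. As an alternative route, one could instead apply the Barles–Perthame half-relaxed limits method, showing that the upper and lower semicontinuous limits of $V^\epsilon$ are respectively viscosity sub- and supersolutions of \eqref{eq:HJBnu}, and then invoking Corollary \ref{cor:comparison} to collapse them onto $V$; the ruin-time obstacle would then reappear as verification of the boundary condition \eqref{eq:bcx1} for the relaxed limits.
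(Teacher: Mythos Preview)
Your proposal is correct and follows essentially the same route as the paper: reduce $\|V^\epsilon-V\|\to 0$ to convergence of the payoff functionals $J^{(u),\epsilon}\to J^{(u)}$, establish $L^2$-convergence of $(\nu^\epsilon,X^{(u),\epsilon})$ to $(\nu,X^{(u)})$ on compacts via a Gronwall estimate, upgrade this to convergence of the ruin times, and close with a time-truncation plus exponential-tail argument. The two substantive differences are in execution. First, you make the uniformity in $u$ explicit via $|\sup_u f_u-\sup_u g_u|\le\sup_u|f_u-g_u|$ and the observation that $X^{(u),\epsilon}_t-X^{(u)}_t=\int_0^t(\nu^\epsilon_s-\nu_s)\,ds$ is independent of $u$; the paper instead cites an external result (\cite[Corollary~7.4]{frey2014}) to pass from convergence of each $J^{(u)}$ to convergence of the supremum. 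Second, for $\tau^\epsilon\to\tau$ the paper argues via u.c.p.\ convergence and then extracts a subsequence along which convergence is u.c.a.s., whereas you invoke Girsanov and the non-degenerate $\sigma W$ component to rule out tangential contact with $0$ uniformly over $u\in A$; your argument is somewhat more self-contained and directly yields the uniformity needed, while the paper's subsequence step tacitly relies on the ``every subsequence has a further convergent subsequence'' device to recover the full limit. Your alternative Barles--Perthame route via Corollary~\ref{cor:comparison} is also viable and is not pursued in the paper.
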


\begin{proof}
 We show that the result holds for every value function $J^{(u),\epsilon} \to J^{(u)}$. From this we may conclude that the result also holds for $V$, see \cite[Corollary 7.4]{frey2014}.\\

 Let $J^{(u),\epsilon,T}, J^{(u),T}$ denote the corresponding value functions stopped at time $0<T<\infty$. We have
 \begin{align*}
  \lim_{\epsilon\to 0} \|J^{(u),\epsilon} (x,\upsilon)-J^{(u)} (x,\upsilon)\| & = \lim_{\epsilon\to 0} \lim_{T \to \infty}\|J^{(u),\epsilon} -J^{(u)} \|\\
  &= \lim_{\epsilon\to 0} \lim_{T \to \infty}\|J^{(u),\epsilon}-J^{(u),\epsilon,T}+J^{(u),\epsilon,T}-J^{(u),T}+J^{(u),T} -J^{(u)} \|\\
  &\le \lim_{\epsilon\to 0} \lim_{T \to \infty}\left(\|J^{(u),\epsilon}-J^{(u),\epsilon,T}\|+\|J^{(u),\epsilon,T}-J^{(u),T}\|+\|J^{(u),T} -J^{(u)}\|\right)\\
  &= \lim_{\epsilon\to 0} \lim_{T \to \infty}\|J^{(u),\epsilon}-J^{(u),\epsilon,T}\|+\lim_{\epsilon\to 0} \lim_{T \to \infty}\|J^{(u),\epsilon,T}-J^{(u),T}\|\\ &+\lim_{\epsilon\to 0} \lim_{T \to \infty}\|J^{(u),T} -J^{(u)}\|\,,\\
 \end{align*}
 where we skipped the arguments in the calculations and used that all terms in the last but one row are bounded since the optimal value function is bounded due to Lemma \ref{lem:x-infinity}. Now we show that all terms tend to $0$.
 \begin{align*}
   \lim_{\epsilon\to 0} \lim_{T \to \infty}\|J^{(u),T} -J^{(u)}\| &= \lim_{\epsilon\to 0} \lim_{T \to \infty}\left\|\E_{x,\upsilon}\left(\int_0^{\tau \wedge T} e^{-\delta t} u_t \, dt \right)-\E_{x,\upsilon}\left(\int_0^{\tau} e^{-\delta t} u_t \, dt \right)\right\|\\
   &\le \frac{K}{\delta}\lim_{\epsilon\to 0} \lim_{T \to \infty}\left\|\E_{x,\upsilon}\left( e^{-\delta \tau} -e^{-\delta(\tau \wedge T)} \right)\right\|\,,
 \end{align*}
 and using that the last term is bounded we get
 \begin{align*}
   \lim_{\epsilon\to 0} \lim_{T \to \infty}\|J^{(u),T} -J^{(u)}\|
   \le \frac{K}{\delta}\lim_{\epsilon\to 0} \left\|\E_{x,\upsilon}\left(\lim_{T \to \infty}\left( e^{-\delta \tau} -e^{-\delta(\tau \wedge T)} \right)\right)\right\|=0\,.
 \end{align*}
 Analogously we obtain $\lim_{\epsilon\to 0} \lim_{T \to \infty}\|J^{(u),\epsilon}-J^{(u),\epsilon,T}\|=0$.
 For the last term we get
 \begin{align*}
  \lim_{\epsilon\to 0} \lim_{T \to \infty}\|J^{(u),\epsilon,T}-J^{(u),T}\|&=\lim_{\epsilon\to 0} \lim_{T \to \infty}\left\|\E_{x,\upsilon}\left(\int_0^{\tau^\epsilon \wedge T} e^{-\delta t} u_t \, dt \right)-\E_{x,\upsilon}\left(\int_0^{\tau \wedge T} e^{-\delta t} u_t \, dt \right)\right\|\\
  &=\lim_{\epsilon\to 0} \lim_{T \to \infty}\left\|\E_{x,\upsilon}\left(\int_{\tau \wedge \tau^\epsilon \wedge T}^{(\tau \vee \tau^\epsilon) \wedge T} e^{-\delta t} u_t \, dt \right)\right\|\\
  &\le \frac{K}{\delta}\lim_{\epsilon\to 0} \lim_{T \to \infty}\left\|\E_{x,\upsilon}\left(e^{-\delta(\tau \wedge \tau^\epsilon \wedge T)}-e^{-\delta ((\tau \vee \tau^\epsilon) \wedge T)}\right)\right\|\\
  &= \frac{K}{\delta}\lim_{T \to \infty} \left\|\E_{x,\upsilon}\left(\lim_{\epsilon\to 0}\left(e^{-\delta(\tau \wedge \tau^\epsilon \wedge T)}-e^{-\delta ((\tau \vee \tau^\epsilon) \wedge T)}\right)\right)\right\|\,,
 \end{align*}
 where we used boundedness of the term in the last but one row. Now it remains to show that $\tau^\epsilon \to \tau$.
 Proving that $\E\left(\sup_{0\le t \le T}\| \nu^{\epsilon}_t-\nu_t\|^2\right) \to_{\epsilon\to 0} 0$ and $\E\left(\sup_{0\le t \le T}\| X^{(u),\epsilon}_t-X^{(u)}_t\|^2\right) \to_{\epsilon\to 0} 0$ runs along the same lines as in \cite[Proof of Lemma 7.2]{frey2014}.
 Since $X^{(u),\epsilon} \to X^{(u)}$ u.c.p.~we have that along a subsequence $\epsilon(k)$, $X^{(u),\epsilon(k)} \to X^{(u)}$ u.c.a.s.~and hence also $\tau^{\epsilon(k)} \to \tau$ a.s.
 Thus $\lim_{\epsilon\to 0} \lim_{T \to \infty}\|J^{(u),\epsilon,T}-J^{(u),T}\|=0$, which closes the proof.
\end{proof}

Now we are ready to solve \eqref{eq:HJBnuappr} numerically.
We first of all have to restrict our computation domain and therefore choose a sufficiently large number $H$ to approximate the 
domain of the value function by $[0,H]\times [\mu_2,\mu_1]$.\\

To compute the approximate outcomes of our problem, we use policy iteration. Initially, we use a dividend policy of threshold type, since such strategies solve the problem with complete information and thus are good candidates also in our situation. As the initial threshold level we use the convex combination of the threshold levels which are outcomes to the problem with complete information from \citet{sotomayor2011}, as we expect it to be close to the correct solution to our problem:
$b_0(\upsilon):=\frac{\mu_1-\upsilon}{\mu_1-\mu_2}
\bar{b}_2 
+\frac{\upsilon-\mu_2}{\mu_1-\mu_2}\bar{b}_1$,
where $\bar b_1, \bar b_2$ denote the threshold levels in the case with complete information for states $1$ and $2$, respectively.
The initial strategy is given by
$u^{(0)}(x,\upsilon)=K1_{\{x\ge b_0(\upsilon)\}} (x,\upsilon)$.

Note that the mesh we use for our computation is generated such that there are more grid points available where they are needed the most -- between $b_0(\mu_2)$ and $b_0(\mu_1)$.
For more details about the mesh generation we refer to \cite{sz12}.

Now we iteratively apply the following procedure:
\begin{itemize}
 \item For a given strategy $u^{(k)}$ calculate $V^{(k)}$ by solving
\begin{align}\label{eq:numHJB}
(\tilde \cL^G-\delta)V+\bar \epsilon \cD^G_{\upsilon \upsilon}V+u^{(k)}(1-\cD^G_xV)=0\,,
\end{align}
where $\tilde \cL^G$ is the operator $\tilde \cL$ with differentiation 
operators replaced by finite differences, $\cD^G_x$ is 
a finite difference approximation to differentiation with respect to $x$, and
$\cD^G_{\upsilon \upsilon}$ is the finite difference operator replacing the second derivative w.r.t. $\upsilon$.

\item The next iterate $u^{(k+1)}$ is chosen to maximize $u(1-\cD^G_xV^{(k)})$. Thus
$u^{(k+1)}(x,\upsilon)=K1_{\{\cD^G_x V^{(k)}(x,\upsilon)\le 1 \}}$.
\end{itemize}
In our experiments the iteration stops after $6$ steps, since $u^{(6)} \approx u^{(5)}$.\\
The idea behind the construction of the finite difference method is based on the fact that in the discretized setting the diffusion is approximated by Markov chains which locally preserve properties of the original process (cf. \cite[p. 67]{kushner2001}).
The additional term $\bar \epsilon$ is required to guarantee positivity of the scheme and hence to obtain its Markov chain interpretation.
Corresponding convergence results can be found in \cite{kushner1990, kushner2001} and \cite[Chapter IX]{fleming2006}. In \cite[p.~324]{fleming2006} it is noted that the policy iteration converges, yielding value function and associated policy.


\subsection*{Numerical results}
\label{subsec:Results}

We computed both value function and dividend policy for the parameter sets
$\sigma = 1$, $\mu_1=2$, $\mu_2=1$, $\delta =0.5$, $-q_{11}=0.25$, $q_{21}=0.5$, $B=10$, and $K \in \{ 0.2, 0.3, 0.67, 1.8 \}$.
The resulting strategies turn out to be threshold strategies with threshold levels
depending on the estimate of $\mu$. Figure \ref{fig:thresholdplot} shows the resulting threshold levels and compares them to the threshold levels which are outcomes to the dividend maximization problem with constant and unobservable drift, i.e., $Q \equiv 0$, which is studied in \cite{sz12}. Interestingly, while in the case studied herein the threshold level is falling in $\upsilon$ for $K=0.67$, it increases in the Bayesian case. But note that for this parameter choice both curves are rather flat.

The intuition behind the fact that the threshold level grows for some parameter sets, whereas it falls for others is the following.
Usually, the level falls in $\upsilon$ since in the better state the company can pay out dividends earlier as it will recover from it due to the higher drift.
However, for low values of $\upsilon$ the volatility comes more into effect and hence if $K$ is small anyway, then it becomes better to pay dividends even for low values of $x$, because volatility might lead into early ruin. Then, as $\upsilon$ grows it becomes more effective than the volatility and hence soon ruin is not expected anymore and the strategy is designed for a longer living company for higher values of $\upsilon$.

It is interesting that for all of the four parameter sets in the state with the lower drift dividends are paid out more cautiously in the Markov switching case than in the Bayesian case, and in the state with the higher drift it is the other way round. An explanation for this is that since in the Markov switching case there is a chance that the economy gets better, it is the better strategy to wait, if $\upsilon$ is small and pay out dividends at higher values of $\upsilon$.
Hence the state with the lower drift is the state of saving, whereas the other state is the state of spending.
In the Bayesian case the drift does not change and therefore the situation is more balanced.\\

\begin{figure}[ht]
\begin{center}
 \includegraphics[scale=0.6]{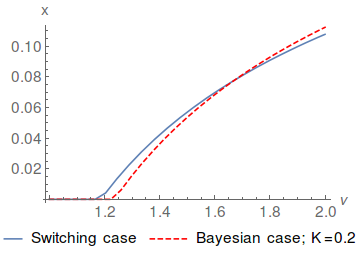}
 \includegraphics[scale=0.6]{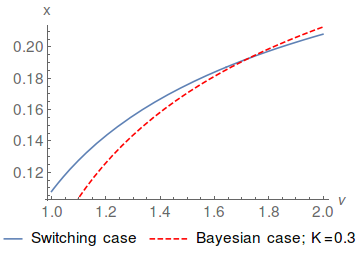}
  
 \includegraphics[scale=0.6]{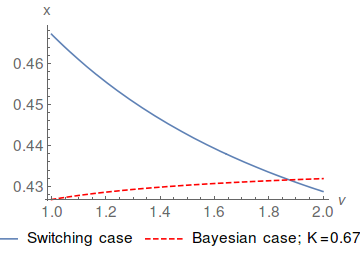}
 \includegraphics[scale=0.6]{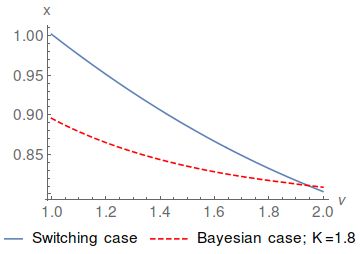}
\end{center}
\caption{The resulting threshold levels for different parameter sets (blue) compared to the threshold levels from the Bayesian case (dashed red).}\label{fig:thresholdplot}
\end{figure}

Figure \ref{fig:valueplot} shows the value function corresponding to $K=1.8$ (but they all look rather similar).

\begin{figure}[ht]
\begin{center}
 \includegraphics[scale=0.6]{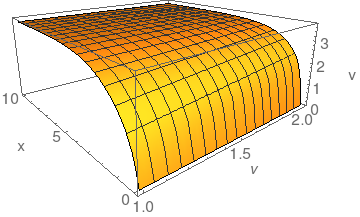}
\end{center}
\caption{The resulting value function for $K=1.8$.}\label{fig:valueplot}
\end{figure}

Figure \ref{fig:valueplot} suggests that the optimal value function is smooth, however proving smoothness is -- due to the degeneracy of the HJB equation, which is highly non-standard -- out of the scope of this paper.\\

Figure \ref{fig:valuecomp} compares the resulting value function to the one from the Bayesian case. We observe that for a smaller estimate of the drift the value is higher in the case with switching,
and for a high estimate of the drift the value is higher in the Bayesian case. This is due to the fact that if a high drift is expected in the Bayesian case, it is more probable that the drift is in fact high,
whereas in the case studied in this paper the drift might change to the worse. For a small estimate of the drift it is exactly the other way round.\\

\begin{figure}[ht]
\begin{center}
\includegraphics[scale=0.6]{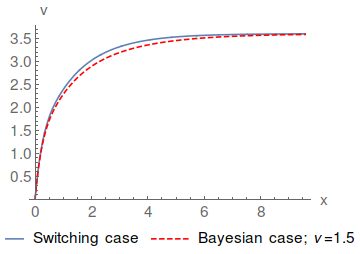}
\includegraphics[scale=0.6]{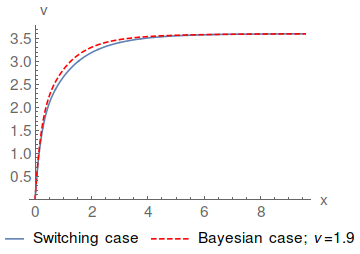}
\end{center}
\caption{The resulting value function for $K=1.8$ (blue) compared to the resulting value function from the Bayesian case (dashed red).}\label{fig:valuecomp}
\end{figure}

One more interesting case to study is that of a high dividend bound $K$. Figure \ref{fig:Kgrows} suggests that for growing parameter $K$ the dividend policy converges to what we expect to be the optimal barrier level in the case of unbounded dividend payments. In future research, it would be of interest to study the singular control problem with unbounded dividend rates.

\begin{figure}[ht]
\begin{center}
\includegraphics[scale=.8]{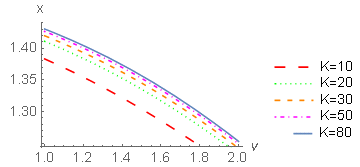}
\end{center}
\caption{The resulting threshold levels for high values of $K$.}\label{fig:Kgrows}
\end{figure}


\subsection*{Admissibility of threshold strategies}
\label{subsec:threshold}

For the Markov switching model under full information \citet{sotomayor2011} find threshold strategies to be optimal. For each state $i$ of the underlying Markov chain they get a constant threshold level $b_i$ such that no dividends are paid below this level and dividends are paid at the maximum rate $K$, if the surplus process exceeds the level. Remember however that in their setup the current state is observable, and therefore it can always be decided which threshold level has to be chosen.\\
In our case, as the current state is estimated in a continuous way, we have only one threshold level. We see that the numerical approximation of the optimal dividend policy is of threshold type with a threshold level $b$ depending on the estimate of the drift, and hence is of the form
\begin{align*}
 u_t=K \, 1_{\{X_t \ge b(\nu_t)\}}(X_t)\,.
\end{align*}


Therefore, it is of considerable importance to study the admissibility of this type of strategies. The question is whether the system
\begin{align}
\label{eq:dynX2ad}X_t&= x + \int_0^t (\nu_s-K \, 1_{\{X_s \ge b(\nu_s)\}}(X_s)) \,ds +\sigma W_t\,,\\
\label{eq:dynpiad}\pi_i(t)&=p_i+\int_0^t \left(q_{Mi} + \sum_{j=1}^{M-1} (q_{ji}-q_{Mi}) \pi_j (s)\right) \, ds + \int_0^t \pi_i (s) \frac{\mu_i-\nu_s}{\sigma} \, dW_s\,, \qquad  i = 1, \dots, M-1 \,,
\end{align}
with $\nu$ as in \eqref{eq:nu}, has a solution. As the drift coefficient of this SDE is discontinuous, classical results from the SDE literature as \cite[Theorem 2.2]{mao2007} 
cannot be applied. However, for threshold levels $b$ which satisfy the Assumptions of \cite[Theorem 3.20]{sz2015b}, we get existence and uniqueness of a unique global strong solution to system \eqref{eq:dynX2ad}, \eqref{eq:dynpiad}.


\section{Summary and conclusion}
\label{sec:Concl}

We have presented a diffusion model for the surplus process of an insurance company, where the drift coefficient changes in response to a change of the economic environment.\\
The change of the economic environment has been modeled by a Markov chain, and uncertainty has been introduced by not allowing to observe the current state of the Markov chain.
We have shown how to overcome uncertainty in this situation by applying a result from stochastic filtering theory. Then we have stated the dividend maximization problem and we have derived the associated HJB equation.\\
We have been able to characterize the solution to the stochastic optimization problem as the unique viscosity solution to this HJB equation.\\
Finally, we have presented an extensive numerical study for the solution to the optimization problem, which suggests that the optimal dividend policy is of threshold type. We have shown that such strategies are indeed admissible using a non-standard result on stochastic differential equations.\\
The main contribution of the current paper is the improvement of the regime switching models that have been studied in the literature by not assuming full information any more. Furthermore, emphasis was put on the numerical study to impart a deeper understanding of the behaviour of the optimal dividend policy for different parameter sets on the one hand, and in comparison to the Bayesian case on the other hand.


\section*{Acknowledgements}

The author thanks Gunther Leobacher (Johannes Kepler University Linz), Stefan Thonhauser (Graz University of Technology) and Ralf Wunderlich (BTU Cottbus-Senftenberg)
for fruitful discussions and helpful advice that improved this paper. \\
Furthermore, the author thanks two anonymous referees for their suggestions.\\
M. Sz\"olgyenyi is supported by the Vienna Science and Technology Fund (WWTF): Project MA14-031.\\
The main part of this paper was written while M. Sz\"olgyenyi was member of the Department of Financial Mathematics and Applied Number Theory, Johannes Kepler University Linz, 4040 Linz, Austria.\\
During this time, M. Sz\"olgyenyi was supported by the Austrian Science Fund (FWF): Project F5508-N26, which is part of the Special Research Program "Quasi-Monte Carlo Methods: Theory and Applications".



\end{document}